\documentclass[onecolumn,10pt]{IEEEtran}
\usepackage{graphicx}
\usepackage{amsmath,amssymb}
\usepackage{cases}
\usepackage{color}
\usepackage{amsfonts}
\usepackage[latin9]{inputenc}
\usepackage{indentfirst}
\usepackage{cite}
\usepackage{caption}
\usepackage{algorithm}
\usepackage{algpseudocode}
\usepackage{booktabs}
\usepackage{blkarray}
\usepackage{subfigure}
\usepackage{multirow}
\usepackage{amsthm}
\usepackage{diagbox}
%\captionsetup[table]{labelsep=space}
\parskip 5pt
\makeatletter
\newtheoremstyle{mythm}{3pt}{3pt}{}{16pt}{\bfseries}{:}{.5em}{}
\theoremstyle{mythm}
\newtheorem{theorem}{Theorem}
\setcounter{theorem}{0}
\newtheorem{example}{Example}
\newtheorem{definition}{Definition}
\newtheorem{remark}{Remark}

\newtheorem{corollary}{Corollary}
\newtheorem{lemma}{Lemma}
\newtheorem{construction}{Construction}

\begin{document}
\title{Centralized Hierarchical Coded Caching Scheme
over Two-Layer Networks
\author{Yun Kong, Youlong Wu, and Minquan Cheng}
\thanks{Yun Kong and Minquan Cheng are with Guangxi Key Lab of Multi-source Information Mining $\&$ Security, Guangxi Normal University,
Guilin 541004, China.  (e-mail: yunkong2022@outlook.com, chengqinshi@hotmail.com). }
\thanks{Youlong Wu is with the School of Information Science and Technology, ShanghaiTech University,
201210 Shanghai, China. (e-mail: wuyl1@shanghaitech.edu.cn).}
}
\maketitle
%\title{Centralized Hierarchical Coded Caching Scheme Over Two-Layer Networks
%\author{Yun Kong, Youlong Wu, and Minquan Cheng}
%%\thanks{M. Cheng is with Guangxi Key Lab of Multi-source Information Mining $\&$ Security, Guangxi Normal University,
%%Guilin 541004, China (e-mail: $\{$chengqinshi,jjiang2008$\}$@hotmail.com).}
%}
%
%\date{}
%%\title{Centralized hierarchical coded caching scheme over two-layer networks}
%\maketitle

\begin{abstract}  This  paper considers a hierarchical caching system where a server
connects with multiple mirror sites, each connecting with a distinct set of users, and both the mirror sites and users are equipped with caching memories.   Although there already exist	 works studying this setup and proposing coded caching scheme to reduce   transmission  loads, two main problems are remained to address: 1)  the optimal communication load  under the uncoded placement  for the first hop, denoted by $R_1$, is still unknown.  2) the previous schemes are based on Maddah-Ali and Niesen's data placement and delivery, which requires high    subpacketization level. How to achieve the well tradeoff between    transmission loads and subpacketization level for the hierarchical caching system is unclear.  In this paper, we aim to address these two problems. We first propose a new combination structure named hierarchical placement delivery array (HPDA), which  characterizes the data placement and delivery for any hierarchical caching system. Then we construct two classes of HPDAs, where the first class leads to a scheme achieving the optimal $R_1$ for some cases, and the second class requires a smaller  subpacketization level at the cost of slightly increasing   transmission loads.

%consists of a central server containing $N$ files of equal size and $K_1$ cache-aided mirror sites and $K_1K_2$ cache-aided users, where each mirror site and each user have memories of size $M_1$ files and $M_2$ files respectively. The server transmits some messages to users through their attached mirror sites over an error-free shared-link. Meanwhile each mirror site could also transmit some messages through an error-free shared-link to its attached users from its cached contents.
%We prefer to design the scheme with the two transmission loads $R_1$ from the server to the mirrors and $R_2$ from mirrors to their attached users as small as possible.
%The existing schemes have achieved the optimality of $R_2$ but have not achieved the optimality of $R_1$ and have a high subpacketization level. In this paper we first proposed a new combination structure named hierarchical placement delivery array (HPDA) which can be used to characterize the hierarchical coded caching scheme. It is worth noting that most of the existing schemes can also be represented by HPDA. Then we construct two classes of HPDAs which realize two classes of schemes where the first class achieves the optimality of $R_1$  and the second has a low subpacketization level at the cost of adding some increase on $R_1$ and $R_2$.
\end{abstract}

\begin{IEEEkeywords}
hierarchical placement delivery array, hierarchical coded caching scheme, transmission load, subpacketization.
\end{IEEEkeywords}
\section{Introduction}
%\IEEEPARstart{W}{ith}
With the growing data demand especially the streaming media, there exists an extreme transmission pressure during the peak traffic hours in the wireless network. It is well known that caching system is an efficient way to reduce transmission during the peak traffic hours by shifting traffic from peak to off peak hours. That is, the central server can firstly place some contents into users' memories during the off peak traffic hours. During the peak traffic hours, the central server would only transmit the contents which have not been cached by the users. In addition, Maddah-Ali and Niesen in \cite{MN} showed that the contents cached by the users can be used to further reduce the transmission load during the peak traffic hours since these contents could generate more multicast opportunities among the users. They first introduced the centralized $(K,M,N)$ caching system where a single server having access to a library containing $N$ files is connected to $K$ cache-aided users whose cache size is $M$ files through an error-free shared-link. An $F$-division $(K,M,N)$ coded caching scheme contains two phases. During the placement phase, each file is divided into $F$ packets, where $F$ is referred as the subpacketization, and the server places at most $MF$ packets into each user's cache without any information about   users' demands. Since the packets are directly cached by the users, this placement strategy is called uncoded placement. In the delivery phase, each user requests a file from the server randomly and the server broadcasts some coded messages to users such that all the users can decode their requesting files with the help of their cached packets. We focus on the  worst case where each user requests a distinct file. The transmission amount normalized by the  size of file is defined as the ``transmission load".

To further reduce the transmission load of a $(K,M,N)$ caching system, Maddah-Ali and Niesen proposed the first centralized coded caching scheme \cite{MN} (MN scheme) and the first decentralized coded caching scheme \cite{MND} (MN decentralized scheme) respectively.
%The subpacketization and transmission load of MN scheme are $F={K\choose KM/N}$, $R= \frac{K(1-\frac{M}{N})}{\frac{KM}{N}+1}$, respectively for each $\frac{KM}{N}\in\{0,1,\ldots,K\}$.
It is worth noting that when $N\geq K$, MN scheme has the minimum transmission load under uncoded placement in \cite{WTDPP}.  However, the subpacketization $F$ of the MN scheme increases exponentially with the growing on user number $K$. In order to design a scheme with low subpacketization, the author in \cite{YCTC} proposed a combination structure named placement delivery array (PDA) to simultaneously characterize the placement and delivery phase of a coded caching scheme. The MN scheme could also be depicted by PDA referred to as MN PDA. It is worth noting that MN scheme has the minimum subpacketization for the fixed minimum transmission load among all the schemes which can be realized by PDAs \cite{CJTY}. Besides PDA, there also exists some other combination structures to describe a coded caching scheme aiming at reducing the subpacketization level, such as \cite{SCZYG,YTCC,CLTW}, etc.

%There are many works on coded caching for a variety of network topologies: caching in combination networks \cite{JWTALJCMM}, caching in Wireless D2D networks in \cite{JCA}, the multi-server coded caching in \cite{SPMAKB}, shared cache with multiple antennas in \cite{PEAE}, coded caching with random demands in \cite{JTJC}, etc. In this paper, we consider the coded caching scheme with a practical network model in \cite{KNMD}, the hierarchical network model.

\subsection{Two-layer hierarchical network model}
\label{subsec:network-model}
\begin{figure}[http!]
\centering
\includegraphics[scale=0.5]{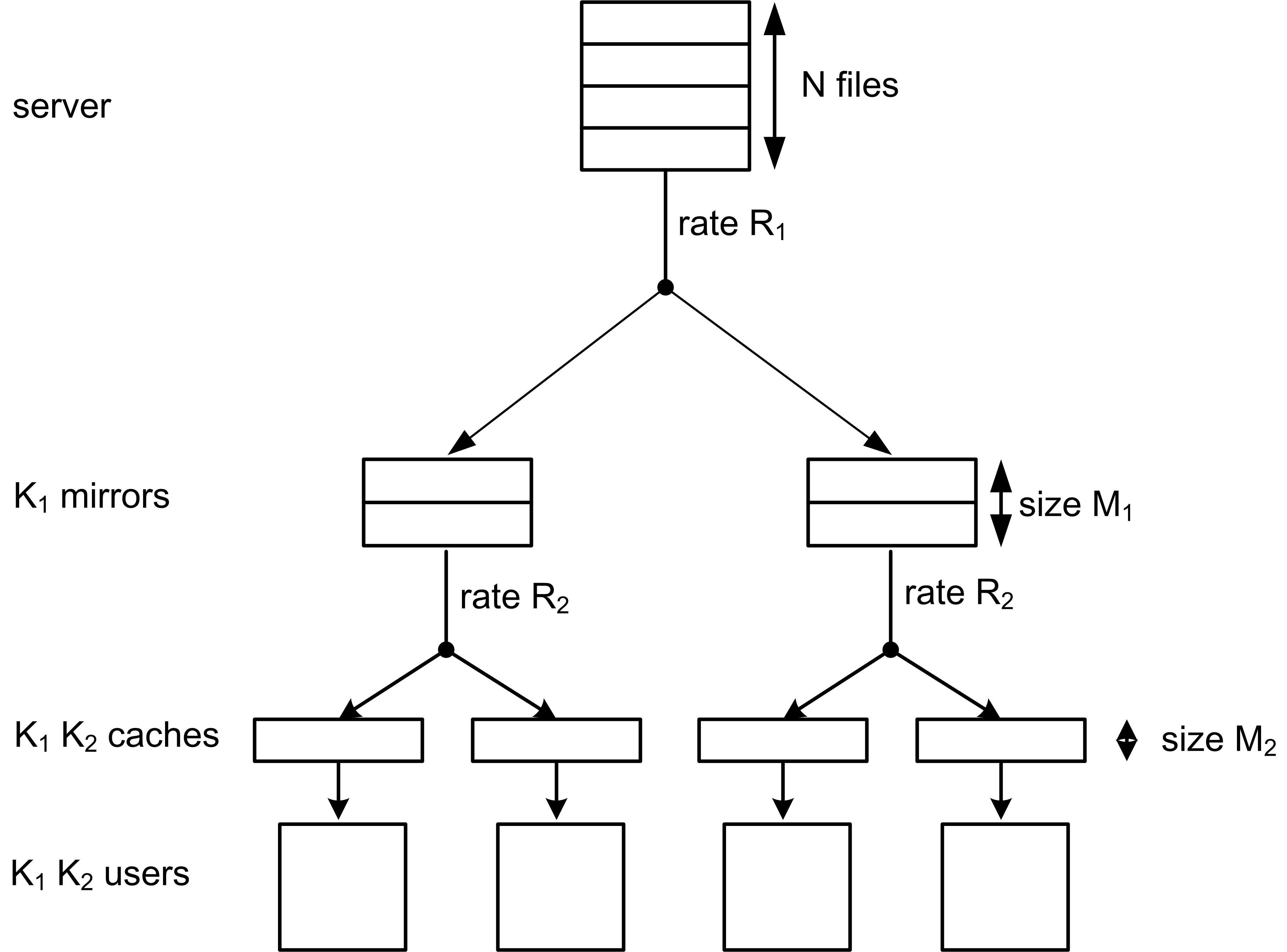}
\caption{The $(K_1,K_2;M_1,M_2;N)$ hierarchical  caching system with $N=4$, $K_1=K_2=2$, $M_1=2$ and $M_2=1$.}
\label{fig-model}
\end{figure}
In reality, the caching systems consist multiple layers of caches in most cases, which means between the central server and the end users, there are some in-between devices, and all these devices are arranged in a tree-like hierarchy with the central server at the root node while the end users act as the leaf nodes. Between each layer, the parent node communicates with its children nodes. As illustrated in Fig. \ref{fig-model}, a $(K_1$, $K_2$; $M_1$, $M_2$; $N)$ hierarchical caching system was first studied in \cite{KNMD}. That is, a two-layer hierarchical network consists of a single origin server and $K_1$ cache-aided mirror sites and $K_1K_2$ cache-aided users where the server hosts a collection of $N\geq K_1K_2$ files with equal size, each mirror site and each user have memories of size $M_1$ files and $M_2$ files respectively where $M_1$, $M_2\leq N$. The server is connected through an error-free shared link to each mirror site. Each mirror site is connected through an error-free broadcast link to $K_2$ users and each user is connected to only one mirror site.

Due to the complexity of the above hierarchical caching system, there are only a few studies in \cite{KNMD,ZWXWL,WWCY}.
In fact, all the existing schemes consist of two-subsystem model controlled by two parameters $\alpha,\beta\in[0,1]$, where the first subsystem includes the whole cache memory size of each mirror sites, an $\alpha$ fraction of each file and a $\beta$ fraction of users' cache memory size, and the second subsystem includes the rest $1-\alpha$ fraction of each file and a $1-\beta$ fraction of users' cache memory size. In \cite{KNMD}, the authors directly used the MN decentralized scheme in each layer. Then the requested files are built in each layer, so it results in a high transmission load of $R_1$. This scheme is referred as KNMD scheme. Without building the whole file in each layer, the authors in \cite{ZWXWL} proposed a hybrid scheme ( the ZWXWL scheme) based on two MN schemes for the two layers, so it has a lower transmission load of $R_1$ and its $R_2$ achieves the minimum transmission load under uncoded placement. However, the ZWXWL scheme ignores the usefulness of users' cache in the second layer when decoding the messages sent from the server. The author in \cite{WWCY} proposed an improved scheme (the WWCY scheme) by utilize two MN schemes. the WWCY scheme also achieves the minimum load of $R_2$ under uncoded placement, while the load of the first layer $R_1$ is smaller than the ZWXWL scheme. There are some other works on hierarchical caching model, such as the coded placement for hierarchical cache-enabled network \cite{TWGSZQX,TWGSSZQX}, the intension between two layers transmission loads \cite{LZX} and the topology where there are some users directly connected to the server \cite{TMHMMM} etc.

%In the following for convenience, we denote the mirror site set by $[K_1]$ and denote the $k_2^{th}$ user which is served by mirror site $k_1$, by couple $(k_1,k_2)$ where $k_2\in [K_2]$.

\subsection{Contribution and paper organization}
\label{subsec:con-org}

In this paper we focus on the hierarchical caching model in \cite{KNMD}. From the introduction in the above subsection, the existing schemes already have a good performance on $R_2$, while designing a scheme aiming at decreasing $R_1$ still remains open. Further, \cite{KNMD,ZWXWL,WWCY} all utilize the MN scheme or MN decentralized scheme, whose subpacketization increases exponentially with the growing on user number, so it is meaningful to design the scheme with minimizing the transmission load for the first layer $R_1$ or reducing the subpacketization. According to the above points, in this paper we obtain the following main results.
\begin{itemize}
\item   Inspired by the concept of PDA, we propose a new combination structure referred to as hierarchical placement delivery array (HPDA), which could be used to characterize the placement and delivery phase of a hierarchical coded caching scheme. So designing a hierarchical coded caching scheme is transformed to constructing an appropriate HPDA.
\item We propose a class of HPDA by dividing a MN PDA into several subarrays. Then we have a class of hierarchical coded caching schemes which achieve the lower bound of the first layer transmission load $R_1$.
\item We provide a general hybrid construction of HPDA based on any two PDAs. So we can get a scheme with flexible subpacketiztion when we choose the base PDAs with low subpacketiziations. In addition if the base PDAs are MN PDAs, then the scheme realized by our HPDA is exactly the scheme in \cite{WWCY}.
\end{itemize}
The rest of this paper is organized as follows. The hierarchical caching model and some preliminary results are introduced in Section \ref{sec:System}. We review PDA and introduce the structure of HPDA in Section \ref{sec:HPDA}. The main results and performance analysis are listed in Section \ref{sec:RESULT}. The proofs of our main results are proposed in Sections \ref{sec:5} and \ref{sec:6}. Finally we conclude this paper in Section \ref{sec:7}.

\subsection{Notations}
 The following notations are used in this paper.

\begin{itemize}
  \item For any positive integers $a$ and $b$ with $a<b$, let $[a:b]\triangleq\{a,a+1,\ldots,b\}$, $[a:b)\triangleq\{a,a+1,\ldots,b-1\}$ and $[a]\triangleq\{1,2,\ldots,a\}$. Let $\binom{[b]}{t}\triangleq \{\mathcal{V}|\mathcal{V}\subseteq [b],|\mathcal{V}|=t\}$, for any positive integer $t\leq b$.
%  \item LCM($a$,$b$) denotes the least common multiple of positive integers $a$ and $b$.
%  \item $<a>_{b}$ denotes the value of $a$ mod $b$.
  \item Given an array $\mathbf{P}=(p_{j,k})_{j\in[F], k\in[K]}$ with alphabet $[S]\cup \{*\}$, we define $\mathbf{P}+a=(p_{j,k}+a)_{j\in[F], k\in[K]}$ and $\mathbf{P}\times a=(p_{j,k}\times a)_{j\in[F], k\in[K]}$  for any integer $a$, where $a+*=*, a\times *=*$.
%\item For any integer sets $\mathcal{T}$ and $\mathcal{T}'$, we say that $\mathcal{T}$ is smaller than $\mathcal{T}'$, denoted by $\mathcal{T}<\mathcal{T}'$ if $\mathcal{T}\subset \mathcal{T}'$ or $\max(\mathcal{T}\setminus \mathcal{T}')<\max(\mathcal{T}'\setminus \mathcal{T})$.
%\item For any two integer set lists $(k,\mathcal{T})$ and $(k',\mathcal{T})$, we say $(k,\mathcal{T})$ is smaller than $(k',\mathcal{T})$, denoted by $<(k,\mathcal{T}')$ if one of the following conditions hold:
%\begin{itemize}
%\item $k<k'$, or
%\item $k=k'$ and $\mathcal{T}<\mathcal{T}'$.
%\end{itemize}
\end{itemize}

\section{Problem Definitions and Prior Works}
\label{sec:System}
     In this section, we first describe the hierarchical   caching system, and then review some   existing works   that motivate  our work in this paper.
\subsection{Hierarchical Caching System}
\label{subsec:hccs}
Consider  a $(K_1,K_2;M_1,M_2;N)$  hierarchical caching system as shown  Fig. \ref{fig-model}, which  consists of a single  server, $K_1$ mirror sites and $K_1K_2$ users. The server connects with  $K_1$ mirror sites via a shared link and each mirror site connects with $K_2$ users via another shared link. The server contains a collection of $N$ files, denoted by $\mathcal{W} = \{W_1, W_2,\ldots,W_{N}\}$, each of which is uniformly distributed over $[0,1]^B$. Each mirror site and user  has memory size of $M_1B$   and $M_2B$ bits, respectively, for some $M_1$, $M_2\geq 0$.

Denote the $k_2$-th user attached to the $k_1$-th mirror site as $\text{U}_{k_1,k_2}$, for  $k_1\in[K_1], k_2\in [K_2]$, and the set of users attached to the $k_1$-th mirror site as $\mathcal{U}_{k_1}$.  An $F$-division $(K_1,K_2;M_1,M_2;N)$ coded caching scheme contains two phases:
 \begin{itemize}
\item {\bf Placement phase:} During the off peak traffic time, each file is divided into $F$ packets with equal size, i.e., $W_{n}=\{W_{n,j}\ |\ n\in [N], j\in [F]\}$ where $B$ is divisible by $F$. Then the mirror sites and users cache some packets of each file. In other words, we consider the uncoded cache placement. Denote the contents cached by the   mirror site $k_1$ and user $\text{U}_{k_1,k_2}$ as $\mathcal{Z}_{k_1}$ and $\mathcal{Z}_{(k_1,k_2)}$, respectively. During the placement phase, we assume the server is not aware of the users' requests.

\item {\bf Delivery phase:} During the peak traffic time, each user requests one file from the file library $\mathcal{W}$ randomly. The  demand vector is denoted by $\mathbf{d} = (d_{1,1}, d_{1,2},\ldots,d_{k_1,k_2})$, i.e., user $\text{U}_{k_1,k_2}$,   requests the $d_{k_1,k_2}$-th file where $d_{k_1,k_2}\in [N]$. The messages sent in the hierarchical network contain two parts:

    \begin{itemize}
    \item \textbf{The messages sent by the server:} Based on the cached contents and the demand vector $\mathbf{d}$, the server broadcasts a message including ${S}({\bf d})$ packets   to $K_1$ mirror sites. %Then mirror site $k_1$ forward the processed $X_{k_1\rightarrow (k_1,[1,K_2])}$ with length $L_{S\rightarrow k_1}$, which is generated by a {\color{red}mapping function} from the message $X_{S}$ and $\mathcal{Z}_{k_1}$, to the users $(k_1,[K_2])$.
   %  That is, there exists an encoding function from the server to mirror sites, $$\psi_{S}\ :\  [2]^{NB}\times [2]^{K_{1}M_{1}B}\times [2]^{K_{1}K_{2}M_{2}B}\times [N]^{K_{1}K_{2}}\rightarrow [2]^{R_{1}B}$$ which generates the transmitted message $$X_{S} \triangleq \psi_{S}(\mathcal{W}, \{\mathcal{Z}_{k_1}\ |\ k_1\in [K_1]\}, \{\mathcal{Z}_{(k_1,k_2)}\ |\ k_1\in [K_1],k_2\in[K_2]\}, {\bf d})$$ as a mapping function of the library $\mathcal{W}$, the cache contents of all the mirror sites $\{\mathcal{Z}_{k_1}\ |\ k_1\in [K_1]\}$, the cache contents of all the users $\{\mathcal{Z}_{(k_1,k_2)}\ |\ k_1\in [K_1],k_2\in[K_2]\}$ and the demand vector ${\bf d}$, where $R_1B$ is the size of $X_{S}$.
    \item \textbf{The messages sent by mirror site:} Based on the messages sent by the server, the locally cached contents and the demand vector $\mathbf{d}$, each mirror site $k_1$ broadcasts a   coded messages of size $S_{k_1}({\bf d})$ packets to its attached users (i.e., users in $\mathcal{U}_{k_1}$), such that all the  users can recover their requested files.%which is of length $R_{2}B$. There exists an encoding function from the mirror site $k_1$ to users $\text{U}_{k_1}$:

%    For the users who can not get all the packets of the required files. And these required packets have been cached by the mirrors
%    The second subphase works for the user those who didn't recover all the desired file blocks in the first subphase. In this subphase, only mirror produces coded signal $X_{m\rightarrow k}$ and transmit it to its connected users. User $k\in \mathcal{K}$ can recover the rest desired file blocks from $\mathcal{Z}_k$ and $X_{m\rightarrow k}$.
    \end{itemize}
\end{itemize}

In this paper, we consider the \emph{worst} case where each user requests a distinct file. Given a hierarchical caching system described above, the transmission loads in terms of files for the first and second layer are defined as
\begin{eqnarray*}
&&R_1=  \max\left.\left\{\frac{S({\bf d})}{F}\ \right|\ {\bf d}\in [N]^{K_1K_2}\right\},\\
%R_2=\max_{{\bf d}\in[N]^{K_{2}}}\left\{\  \frac{L_{S\rightarrow k_1}+L_{k_1\rightarrow (k_1,[1,K_2])}}{B}\ \right\}.
&&R_2=\max\left.\left\{ \frac{S_{k_1}({\bf d})}{F}\ \right|\ k_1\in[K_1],{\bf d}\in [N]^{K_1K_2}  \right \},
\end{eqnarray*}
respectively. Define the \emph{optimal} transmission loads of the first and second layer, denoted by  $R_1^*$ and $R_2^*$, as the minimum transmission load of $R_1$ and $R_2$ under uncoded placement respectively,  such that all users can recover their requesting files.

\subsection{Prior Works}
\label{subsec:prior-work}
%Due to the complication of the hierarchy network, there are a few results on the coded caching schemes.
The authors in \cite{KNMD} first studied this hierarchy   caching system and proposed a   decentralized hierarchical  coded caching scheme, namely the KNMD scheme,     based on the decentralized coded caching scheme for cache-aided broadcast network \cite{MND}.
%Define
%\begin{eqnarray}
%\label{base-r}
%r\left(\frac{M}{N},K\right) \triangleq \left[K\left(1-\frac{M}{N}\right)\frac{N}{KM}\left(1-\left(1-\frac{M}{N}\right)^{K}\right)\right]^{+},\ \ \ \ [x]^{+}\triangleq max\{x,0\}.
%\end{eqnarray}
The   main idea of KNMD scheme is to divide the system into two independent subsystems for some fixed parameters $\alpha$, $\beta$ $\in [0:1]$.  The first subsystem includes the entire cache memory of each mirror site and a $\beta$ fraction of each user's cache memory, which is responsible for caching and delivering the $\alpha$ parts of  each file.
%{\color{blue}Denote the transmission loads of the first and second layer are $R^{1}_1$ and $ R^{1}_2$}.
The second subsystem includes the remaining $(1-\beta)$ fraction of each user's cache memory, and is responsible for caching and delivering the left $(1-\alpha)$ parts of each file.
%{\color{blue}Denote the transmission loads of the first and second layer are $R^{2}_1$ and $ R^{2}_2$}.
In the first subsystem,  the server first sends coded signals to $K_1$ mirror sites using the single-layer MN decentralized coded caching scheme \cite{MND} where each mirror site  requests $K_2$ distinct files, without considering users'  cache contents. Then each mirror site   decodes its intend $K_2$ files and   applies again the single-layer decentralized coded caching scheme to broadcast a message to its attached users to satisfy their demands. In the second sub-system, the server  ignores the cache contents of mirror sites, and applies the   single-layer decentralized coded caching scheme to directly serve  $K_1K_2$ users each of caching size $(1-\beta) M_2$.    By extending this scheme to the case with centralized data placement \cite{MN}, we obtain     the transmission loads of the first and second layer, denoted by $R^\text{KNMD}_1$ and $R^\text{KNMD}_2$,   as
\begin{eqnarray}
\label{DHCC-B}
\begin{split}
R^\text{KNMD}_1(\alpha,\beta)&\triangleq   \alpha\cdot K_2\cdot r_c\left(\frac{M_1}{\alpha N},K_1\right)
+(1-\alpha)\cdot r_c\left(\frac{(1-\beta)M_2}{(1-\alpha)N},K_1K_2 \right),&\\
R^\text{KNMD}_2(\alpha,\beta)&\triangleq  \alpha\cdot r_c\left(\frac{\beta M_2}{\alpha N},K_2\right)
+(1-\alpha)\cdot r_c\left(\frac{(1-\beta)M_2}{(1-\alpha)N},K_2 \right),
\end{split}
\end{eqnarray} for some $\alpha$ and $\beta$, where
\begin{eqnarray*}
% \nonumber to remove numbering (before each equation)
  r_c\left(\frac{M}{N},K\right)  \triangleq \frac{K (1-M/N)}{1+KM/N}
\end{eqnarray*}is the transmission load of the $(K,M,N)$ MN scheme for any memory ration $\frac{M}{N}\in\{0,\frac{1}{K},\frac{2}{K},\ldots,1\}$.

%Note that any novel scheme can be combined with the second subsystem to obtain a hybrid coded caching scheme.
%    \begin{eqnarray}
%    \label{eq-construct3}
%    \begin{split}
%        (\alpha,\beta)=\left\{\begin{array}{ccc}
%            (\frac{M_1}{N},\frac{M_1}{N}), &  \ M_1+M_2K_2\geq N, 0\leq M_1\leq \frac{N}{4} \\[0.2cm]
%            (\frac{M_1}{M_1+M_2K_2},0), &  \ M_1+M_2K_2< N \\[0.2cm]
%            (\frac{M_1}{N},\frac{1}{4}), &  \ M_1+M_2K_2\geq N, \frac{N}{4}\leq M_1\leq N.
%    \end{array}\right.
%    \end{split}
%    \end{eqnarray}

%As we can see, for the first layer of the first subsystem of DHCC scheme, the requested files are built in the first layer which means only the mirror site's cache are used and user's cache are totally ignored. For the second layer of the second subsystem, each mirror site transmits encoding signals to its attached users cause it has rebuilt the complete files. However, some of the requested file blocks may already exists in the user's cache, which means the server has transmitted some redundant file blocks.

In the KNMD scheme, the server sends messages to the mirror sites while ignoring the users' cache contents. This means that the server may send some information which has already been stored by the users, leading to redundant communication cost in the first layer.  To address this problem,   \cite[Section VI]{WWCY}   improved the transmission load of the first layer of DHCC scheme by concatenating two MN schemes, whose transmission loads of the first and second layer, denoted by $R^\text{WWCY}_1$ and $R^\text{WWCY}_2$,  are %That is the following result:
\begin{eqnarray}
\label{eq-load-W}
\begin{split}
R^\text{WWCY}_1(\alpha,\beta)&\triangleq   \alpha \cdot r_c\left(\frac{M_1}{\alpha N},K_1\right)r_c\left(\frac{\beta M_2}{\alpha N},K_2\right)
+(1-\alpha)r_c\left(\frac{(1-\beta)M_2}{(1-\alpha)N},K_1K_2\right),\\
 R^\text{WWCY}_2(\alpha,\beta)&\triangleq  \alpha\cdot r_c\left(\frac{\beta M_2}{\alpha N},K_2\right)
+(1-\alpha)\cdot r_c\left(\frac{(1-\beta)M_2}{(1-\alpha)N},K_2 \right).
\end{split}
\end{eqnarray}

%Since the close forms of optimal $\alpha$ and $\beta$ for is unknown, and any novel scheme can be combined with the second subsystem (a single-layer $K_1K_2$-user coded caching system)  to obtain a hybrid coded caching scheme,  we mainly focus on the scheme for the first subsystem.  The hiearchaical coded caching scheme in  \cite{KNMD} leads to   the \begin{IEEEeqnarray}{rCl}
%R^\text{MN}_1(1,1)&\triangleq   K_2  r_c\left(\frac{M_1}{  N},K_1\right), R^\text{MN}_2(1,1)&\triangleq     r_c\left(\frac{  M_2}{  N},K_2\right) \end{IEEEeqnarray}
%and
 %\begin{IEEEeqnarray}{rCl}
%R^\text{Wang}_1(1,1)=  r_c\left(\frac{M_1}{  N},K_1\right)r_c\left(\frac{  M_2}{  N},K_2\right),\quad
% R^\text{Wang}_2(1,1)=  r\left(\frac{  M_2}{  N},K_2\right) .
%\end{IEEEeqnarray}

Note that under   uncoded placement, schemes in \cite{KNMD} and \cite[Section VI]{WWCY} both achieve the optimal transmission load of the second layer  when $\alpha=\beta$, i.e.,
$$R^*_2=r_c\left(\frac{  M_2}{  N},K_2\right).$$
%And if $M_1\geq \frac{(K_1-1)N}{K_1}$, then the scheme in \cite[Section VI]{WWCY} achieves the optimal load of the first layer, i.e.,
%$$R^*_1=r_c\left(\frac{  M_2}{  N},K_2\right).$$
%based on a hybrid placement array design
An interesting question  is  what is the optimal  transmission load of the first layer. In this paper, we aim  to find novel centralized coded caching schemes  to reduce the transmission load of the first layer (i.e., $R_1$), and   establish its optimal value for some regimes.

\section{Hierarchy Placement Delivery Array}
\label{sec:HPDA}
In this section, we first briefly describe the vanilla PDA for the single-layer cache-aided broadcast network \cite{YCTC}, and then introduce a novel PDA structure, namely  HPDA, that would help characterize the placement and delivery of coded caching schemes for the hierarchical caching system in Fig. \ref{fig-model}.
\subsection{Placement Delivery Array}
\label{subsec:PDA}
\begin{definition}
\label{def-PDA}
(\cite{YCTC}) For any positive integers $K,F, Z$ and $S$, an $F\times K$ array $\mathbf{P}=(p_{j,k})_{j\in[F] ,k\in[K]}$ over alphabet set $\{*\}\bigcup [0,S)$ is called a $(K,F,Z,S)$ PDA if it satisfies the following conditions,
%where ``$*$" represent the cached subpackets and integers represent the broadcast signal:
 \item [C$1$.] The symbol ``$*$" appears $Z$ times in each column;
 \item [C$2$.] Each integer occurs at least once in the array;
 \item [C$3$.] For any two distinct entries $p_{j_1,k_1}$ and $p_{j_2,k_2}$, $p_{j_1,k_1}=p_{j_2,k_2}=s$ is an integer only if
 \begin{enumerate}
 \item [a.] $j_1\ne j_2$, $k_1\ne k_2$, i.e., they lie in distinct rows and distinct columns; and
  \item [b.] $p_{j_1,k_2}=p_{j_2,k_1}=*$, i.e., the corresponding $2\times 2$ subarray formed by rows $j_1,j_2$ and columns $k_1,k_2$ must be of the following form
 \begin{align*}
 \left(\begin{array}{cc}
 s & *\\
 * & s
 \end{array}\right)~\textrm{or}~
 \left(\begin{array}{cc}
 * & s\\
 s & *
 \end{array}\right).
 \end{align*}
 \end{enumerate}
%\begin{remark}(\cite{YCTC})
%	In fact , PDA must satisfy another condition that each integer occurs at least once in the array. However, for the convenience of our introduction to another kind of PDA which is not strictly satisfy the definition of PDA, we delete this condition temporarily.
%\end{remark}
\end{definition}
\begin{example}
\label{example-1}
When $K=F=S=3$ and $Z=1$, we can see that the following array is a $(3,3,1,3)$ PDA.
\begin{eqnarray}
\label{eq-PDA()3313}
   % \nonumber to remove numbering (before each equation)
\mathbf{A}=\left(\small{
     \begin{array}{ccc}
       \ast & 1 & 2 \\
       1 & \ast & 3 \\
       2 & 3 & \ast \\
     \end{array}}
   \right).
   \end{eqnarray}
\end{example}

The authors in \cite{YCTC} showed that a $(K,F,Z,S)$ PDA can be used to realize an $F$-division $(K,M,N)$ coded caching scheme with $\frac{M}{N}=\frac{Z}{F}$ and transmission load $R=\frac{S}{F}$ for the single-layer cache-aided broadcast network. Furthermore the seminal coded cahcing scheme proposed in~\cite{MN} can be represented by a special PDA which is referred to as MN PDA. That is the following result.
\begin{lemma}\rm(MN PDA\cite{MN})
\label{le-MN}
For any positive integers $K$ and $t$ with $t\leq K$, there exists a   $\left(K,{K\choose t},{K-1\choose t-1},{K\choose t+1}\right)$ PDA which realizes a $(K,M,N)$ MN scheme with $\frac{M}{N}=\frac{t}{K}$, subpacketization $F={K\choose t}$ and transmission load $R=\frac{K-t}{t+1}$.
\hfill $\square$
\end{lemma}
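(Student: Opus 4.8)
\emph{Proof plan.} The plan is to write down the Maddah--Ali--Niesen array explicitly, verify that it satisfies the three conditions of Definition \ref{def-PDA} with the claimed parameters $\big(K,\binom{K}{t},\binom{K-1}{t-1},\binom{K}{t+1}\big)$, and then invoke the PDA-to-scheme correspondence of \cite{YCTC} recalled above to conclude that the resulting $F$-division scheme is exactly the MN scheme of \cite{MN}. Concretely, I would index the $\binom{K}{t}$ rows by the $t$-subsets $\mathcal{T}\in\binom{[K]}{t}$ and the $K$ columns by $k\in[K]$, fix any bijection $\mathcal{S}\mapsto g(\mathcal{S})$ from $\binom{[K]}{t+1}$ onto $[0:\binom{K}{t+1})$, and define $\mathbf{P}=(p_{\mathcal{T},k})$ by
\begin{equation*}
p_{\mathcal{T},k}=\begin{cases}\ast, & k\in\mathcal{T},\\ g(\mathcal{T}\cup\{k\}), & k\notin\mathcal{T}.\end{cases}
\end{equation*}
(For $K=3$, $t=1$ this reproduces, up to relabeling of integers, the array $\mathbf{A}$ of Example \ref{example-1}.)

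Next I would check the three conditions. Condition C$1$ is immediate: in column $k$ the symbol $\ast$ occurs exactly at the rows $\mathcal{T}$ with $k\in\mathcal{T}$, of which there are $\binom{K-1}{t-1}$, so $Z=\binom{K-1}{t-1}$. For C$2$, given any $\mathcal{S}\in\binom{[K]}{t+1}$ and any $k\in\mathcal{S}$, the entry in row $\mathcal{S}\setminus\{k\}$ and column $k$ equals $g(\mathcal{S})$; hence every integer appears (in fact exactly $t+1$ times) and the alphabet size is $S=\binom{K}{t+1}$. For C$3$, suppose $p_{\mathcal{T}_1,k_1}=p_{\mathcal{T}_2,k_2}$ is an integer, i.e. $\mathcal{T}_1\cup\{k_1\}=\mathcal{T}_2\cup\{k_2\}=:\mathcal{S}$ with $k_i\notin\mathcal{T}_i$. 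If $k_1=k_2$ then $\mathcal{T}_1=\mathcal{S}\setminus\{k_1\}=\mathcal{T}_2$, so these are the same entry; thus two \emph{distinct} entries force $k_1\ne k_2$, and then $\mathcal{T}_1=\mathcal{S}\setminus\{k_1\}\ne\mathcal{S}\setminus\{k_2\}=\mathcal{T}_2$, which is part (a). Since $k_2\in\mathcal{S}$ and $k_2\ne k_1$ we get $k_2\in\mathcal{T}_1$, hence $p_{\mathcal{T}_1,k_2}=\ast$, and symmetrically $p_{\mathcal{T}_2,k_1}=\ast$, which is part (b). A direct count then gives $\frac{Z}{F}=\binom{K-1}{t-1}/\binom{K}{t}=\frac{t}{K}$ and $\frac{S}{F}=\binom{K}{t+1}/\binom{K}{t}=\frac{K-t}{t+1}$.

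Finally I would feed this $\big(K,\binom{K}{t},\binom{K-1}{t-1},\binom{K}{t+1}\big)$ PDA into the construction of \cite{YCTC}, obtaining an $F$-division $(K,M,N)$ coded caching scheme with $F=\binom{K}{t}$, $M/N=Z/F=t/K$ and $R=S/F=\frac{K-t}{t+1}$; unwinding that construction for this particular array shows the placement is ``user $k$ caches $W_{n,\mathcal{T}}$ iff $k\in\mathcal{T}$'' and that for a demand $\mathbf{d}$ the server broadcasts $\bigoplus_{k\in\mathcal{S}}W_{d_k,\,\mathcal{S}\setminus\{k\}}$ over all $\mathcal{S}\in\binom{[K]}{t+1}$, which is precisely the scheme of \cite{MN}. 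I expect no real obstacle here: the whole argument is elementary set bookkeeping together with two binomial identities. The only step deserving a little care is C$3$, where one must observe that $k_2\in\mathcal{T}_1$ and $k_1\in\mathcal{T}_2$ in order to locate the two $\ast$'s, and note that two equal integer entries in the same column necessarily coincide as entries.
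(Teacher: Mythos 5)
Your proposal is correct and follows essentially the same route as the paper: the array you define is exactly the MN PDA of Construction \ref{con-MN} (up to an immaterial relabeling of the integer alphabet), and your verification of C$1$--C$3$ together with the two binomial identities is the standard argument the paper implicitly relies on when citing \cite{MN,YCTC}. No gaps.
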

Here we briefly review the construction of MN PDA as follows.
\begin{construction}\rm (MN PDA\cite{MN})
\label{con-MN} For any integer $t\in[K]$, let $F={K\choose t}$. Then we have a ${K\choose t}\times K$ array $\mathbf{P}=\left(\mathbf{P}(\mathcal{T},k)\right)_{\mathcal{T}\in {[K]\choose t}, k\in [K]}$ by
\begin{align}\label{Eqn_Def_AN}
\mathbf{P}(\mathcal{T},k)=\left\{\begin{array}{cc}
\phi_{t+1}(\mathcal{T}\cup\{k\}), & \mbox{if}~k\notin\mathcal{T}\\
*, & \mbox{otherwise},
\end{array}
\right.
\end{align}
where $\phi_{t+1}(\cdot)$ is a bijection from  $\binom{[K]}{t+1}$ to $[\binom{K}{t+1}]$ and the rows are labelled by all the subsets $\mathcal{T}\in {[K]\choose t}$ listed in the order from the small to large.
\hfill $\square$
\end{construction}

Finally we should point out that PDA has been widely studied. There are some schemes with lower subpacketization level based on PDA proposed in \cite{YCTC,CJYT,CJWY,CWZW,WCWG,MJW,ZCJ,ZCW,CWLZG,SSRS}. In addition, the authors in \cite{SKTADA} pointed out that all the proposed schemes in \cite{TR,SZG,STD,YTCC,KP} could be represented by appropriate PDAs.

\subsection{Hierarchical Placement Delivery Array}
The definition of hierarchy placement delivery array (HPDA) is given as follows.
\label{subsec:HPDA}
\begin{definition}\label{def-H-PDA}
For any given positive integers $K_{1}, K_{2}, F$, $Z_{1}$, $Z_{2}$ with $Z_1<F$, $Z_2<F$ and any integer sets $\mathcal{S}_\text{m}$ and $\mathcal{S}_{k_1}$, $k_1\in[K_1]$, an $F\times (K_1+K_1K_2)$ array $\mathbf{P}=(\mathbf{P}^{(0)},\mathbf{P}^{(1)},\ldots,\mathbf{P}^{(K_1)})$,
%=(p^{(k_1)}_{j,k_2})_{j\in[0,F) ,k_1\in [0,K_1],k_2\in[K_2]}$
where $\mathbf{P}^{(0)}=(p^{(0)}_{j,k_1})_{j\in[F],k_1\in [K_1]}$ is an $F\times K_1$ array consisting of $*$ and null, and $\mathbf{P}^{(k_1)}=(p^{(k_1)}_{j,k_2})_{j\in[F],k_2\in [K_2]}$ is an $F\times K_2$ array over $\{*\}\bigcup \mathcal{S}_{k_1}$,  $k_1\in[K_1]$, is a $(K_1,K_2;F;Z_1,Z_2;\mathcal{S}_\text{m},\mathcal{S}_1,\ldots,\mathcal{S}_{K_1})$ hierarchy placement delivery array (HPDA) if it satisfies the following conditions:
\begin{itemize}
\item[B1.] Each column of $\mathbf{P}_0$ has $Z_1$ stars;
\item[B2.] $\mathbf{P}^{(k_1)}$ is a $(K_2,F,Z_2,|\mathcal{S}_{k_1}|)$ PDA for each $k_1\in [K_1]$.
\item[B3.] Each integer $s\in \mathcal{S}_\text{m}$ occurs in exactly one subarray $\mathbf{P}^{(k_1)}$ where $k_1\in[K_1]$. And for each $p^{(k_1)}_{j,k_2}=s\in \mathcal{S}_\text{m}$, $j\in[F],k_1\in[K_1],k_2\in[K_2]$, $p^{(0)}_{j,k_1}=*$ ;
\item[B4.] For any two entries $p^{(k_1)}_{j,k_2}$ and $p^{(k'_1)}_{j',k'_2}$ where $k_1\neq k'_1\in[K_1]$, $j,j'\in [F]$ and $k_2,k'_2\in[K_2]$, if $p^{(k_1)}_{j,k_2}=p^{(k'_1)}_{j',k'_2}$ is an integer then
\begin{itemize}
\item $p^{(k_1)}_{j',k_2}$ is an integer only if $p^{(0)}_{j',k_1}=*$;
\item $p^{(k'_1)}_{j,k'_2}$ is an iteger only if $p^{(0)}_{j,k'_1}=*$.
\end{itemize}
\end{itemize}
\end{definition}

\begin{small}
\begin{algorithm}[http!]
	\caption{Caching scheme based on $(K_1,K_2;F;Z_1,Z_2$; $\mathcal{S}_\text{m},\mathcal{S}_1,\ldots,\mathcal{S}_{K_1})$ HPDA $\mathbf{P}$}\label{alg:H-PDA}
	\begin{algorithmic}[1]
		\Procedure {Placement}{$\mathbf{P}$, $\mathcal{W}$}
		\State Split each file $W_n\in \mathcal{W}$ into $F$ packets, i.e., $W_{n}=\{W_{n,j}\ |\ j\in[F]\}$.
		\For{$k_1\in [K_1]$}
		\State $\mathcal{Z}_{k_1}\leftarrow\{W_{n,j}\ |\ p^{(0)}_{j,k_1}=*, n\in [N], j\in[F]\}$
		\EndFor
		\For{$(k_1,k_2), k_1\in[K_1],k_2\in[K_2]$}
		\State $\mathcal{Z}_{(k_1,k_2)}\leftarrow\{W_{n,j}\ |\ p^{(k_1)}_{j,k_2}=*, n\in [N], j\in[F]\}$
		\EndFor
		\EndProcedure
		\Procedure{Delivery\_Server}{$\mathbf{P}, \mathcal{W}, {\bf d}$}
		\For{$s\in \left(\bigcup_{k_1=1}^{K_1}\mathcal{S}_{k_1}\right)\setminus\mathcal{S}_\text{m} $}
		\State  Server sends the following coded signal to   the mirror sites: \\ \ \ \ \ \ \ \ \ \ \ \ \ $X_s=\bigoplus_{p^{(k_1)}_{j,k_2}=s,j\in[F],k_1\in[K_1],k_2\in[K_2]}W_{d_{k_1,k_2},j}$         \EndFor
		\EndProcedure
        \Procedure{Delivery\_Mirrors}{$\mathbf{P}, \mathcal{W}, {\bf d},X_s$}
        \For{$k_1\in [K_1], s\in\mathcal{S}_{k_1}\setminus\mathcal{S}_\text{m} $}
        \State After receiving $X_s$, mirror site $k_1$ sends the following coded signal  to users in $\mathcal{U}_{k_1}$:
		\State  $X_{k_1,s}=X_s\bigoplus\bigg(\!\!\!\!\!\!\bigoplus\limits_{
    \tiny\begin{array}{c}
    p^{(k'_1)}_{j,k_2}=s,p^{(0)}_{j,k'_1}=*,  k_2\in[K_2]\\
    j\in[F],k'_1\in [K_1]\backslash\{k_1\}
    \end{array}
    }\!\!\!\!\!\!\!\!W_{d_{k'_1,k_2},j}\bigg)$

		\EndFor
        \For{$k_1\in[K_1]$, $s'\in \mathcal{S}_{k_1}\bigcap\mathcal{S}_\text{m}$}
		\State Mirror site $k_1$  sends the following  coded signal to users in $\mathcal{U}_{k_1}$
		\State  \ \ \ \ \ \ \ \ $X_{k_1,s'}=\bigoplus_{p^{(k_1)}_{j,k_2}=s',j\in[F],k_2\in[K_2]}W_{d_{k_1,k_2},j}$
		\EndFor
		\EndProcedure
	\end{algorithmic}
\end{algorithm}
\end{small}
For any given HPDA, when we use $\mathbf{P}^{(0)}$ to indicate the data placement at mirror sites, and use $\mathbf{P}^{(k_1)}, k_1\in[K_1]$ to indicate data placement at the users attached to $k_1$-th mirror site (i.e., users in $\mathcal{U}_{k_1}$) and the delivery strategy at the server and mirror sites (see detailed explanation in Remark \ref{re-relationship-HPDA-HCCS}), a hierarchical coded caching scheme can be obtained by Algorithm \ref{alg:H-PDA}.

First we use the following example to demonstrate the placement and delivery strategy of the hierarchal caching scheme by Algorithm \ref{alg:H-PDA} based on a HPDA.
\begin{example}
\label{ex-1}
When $K_{1}=3$, $K_{2}=2$, $F=15$, $Z_{1}=$ $6$, $Z_{2}=4$, $\mathcal{S}_\text{m}=[7:42], \mathcal{S}_1=[1:18], \mathcal{S}_2=[1:6]\bigcup[19:30], \mathcal{S}_3=[1:6]\bigcup[31:42]$,  one can check that the following array is a $(3,2;15;6,4;\mathcal{S}_\text{m},\mathcal{S}_1,\mathcal{S}_2,\mathcal{S}_3)$ HPDA $\mathbf{P}$ in \eqref{eg-HPDA}.
%$\mathbf{P_1}$ represents that the first two users connected to the first mirror site whose cache content is represented by the first column of $\mathbf{P_0}$, and $\mathbf{P_2}$ represents that the last two users connected to the second mirror site whose cache content is represented by the second column of $\mathbf{P_0}$.
\begin{eqnarray}
\label{eg-HPDA}
\begin{split}
\mathbf{P}&=(\mathbf{P_0}, \mathbf{P_1}, \mathbf{P_2}, \mathbf{P_3})\\
&=
\left(
\begin{array}{ccc|cc|cc|cc}
  * & * &   & 7  & 8  & 19 & 20 & 1  & 2 \cr
  * &   &   & 9  & 10 & *  & 1  & *  & 3 \cr
  * &   &   & 11 & 12 & *  & 2  & 3  & * \cr
  * &   &   & 13 & 14 & 1  & *  & *  & 4 \cr
  * &   &   & 15 & 16 & 2  & *  & 4  & * \cr
  * &   & * & 17 & 18 & 3  & 4  & 31 & 32\cr
    & * &   & *  & 1  & 21 & 22 & *  & 5 \cr
    & * &   & *  & 2  & 23 & 24 & 5  & * \cr
    &   & * & *  & 3  & *  & 5  & 33 & 34\cr
    &   & * & *  & 4  & 5  & *  & 35 & 36\cr
    & * &   & 1  & *  & 25 & 26 & *  & 6 \cr
    & * &   & 2  & *  & 27 & 28 & 6  & * \cr
    &   & * & 3  & *  & *  & 6  & 37 & 38\cr
    &   & * & 4  & *  & 6  & *  & 39 & 40\cr
    & * & * & 5  & 6  & 29 & 30 & 41 & 42
\end{array}
\right).
\end{split}
\end{eqnarray}

Based on $\mathbf{P}$ and by Algorithm \ref{alg:H-PDA}, we can get a $15$-$(3,2;2.4,1.6;6)$ coded caching scheme in the following way.

\begin{itemize}
	\item \textbf{Placement Phase}: From Line 2 in Algorithm \ref{alg:H-PDA}, each file is divided into $F=15$ packets with equal size, i.e., $W_{n}=\{W_{n,1}, W_{n,2}, \ldots, W_{n,15}\}, n\in [6]$. From lines 3-5 in Algorithm \ref{alg:H-PDA} and $\mathbf{P}_0$ in \eqref{eg-HPDA}, the contents cached by mirror sites are as follows:
\begin{eqnarray*}
&&\mathcal{Z}_{1} =\{W_{n,1}, W_{n,2}, W_{n,3}, W_{n,4}, W_{n,5}, W_{n,6}\ |\ n\in[6]\},\\
&&\mathcal{Z}_{2} =\{W_{n,1}, W_{n,7}, W_{n,8}, W_{n,11}, W_{n,12}, W_{n,15}\ |\ n\in[6]\},\\
&&\mathcal{Z}_{3} =\{W_{n,6}, W_{n,9}, W_{n,10}, W_{n,13}, W_{n,14}, W_{n,15}\ |\ n\in[6]\}.
\end{eqnarray*}From lines 6-8 in Algorithm \ref{alg:H-PDA} and $\mathbf{P}_{k_1}$ in \eqref{eg-HPDA}, $k_1\in[3]$, the packets cached by the users are as follows:
\begin{eqnarray*}
% \nonumber to remove numbering (before each equation)
&&\mathcal{Z}_{(1,1)} =\{W_{n,7}, W_{n,8}, W_{n,9}, W_{n,10}\ |\  n\in[6]\},\\
&&\mathcal{Z}_{(1,2)} =\{W_{n,11}, W_{n,12}, W_{n,13}, W_{n,14}\ |\  n\in[6]\},\\
&&\mathcal{Z}_{(2,1)} =\{W_{n,2}, W_{n,3}, W_{n,9}, W_{n,13}\ |\  n\in[6]\},\\
&&\mathcal{Z}_{(2,2)} =\{W_{n,4}, W_{n,5}, W_{n,10}, W_{n,14}\ |\  n\in[6]\},\\
&&\mathcal{Z}_{(3,1)} =\{W_{n,2}, W_{n,4}, W_{n,7}, W_{n,11}\ |\  n\in[6]\},\\
&&\mathcal{Z}_{(3,2)} =\{W_{n,3}, W_{n,5}, W_{n,8}, W_{n,12}\ |\  n\in[6]\}.
\end{eqnarray*}	
\item\textbf{Delivery Phase}: Assume that ${\bf d}=(1,2,3,4,5,6)$. From Algorithm \ref{alg:H-PDA}, the messages sent to all users consist of two parts.
\begin{itemize}
\item The messages $S({\bf d})$ sent by the server: From \eqref{eg-HPDA} and Line 11, we have $s\in\left(\bigcup_{k_1=1}^{3}\mathcal{S}_{k_1}\right)\setminus\mathcal{S}_\text{m}=[1:6]$. By Lines 11-14 in Algorithm \ref{alg:H-PDA}, the sever transmits the coded messages $X_{s}$:
\begin{equation*}
	\begin{split}
		%X_{S} &=\ W_{1,2}\oplus W_{2,1}\oplus W_{3,3}\oplus W_{4,2}
		%s &= 1:\ W_{2,3}\oplus W_{3,0} \\
		%s &= 2:\ W_{0,3}\oplus W_{1,2}
        &W_{1,11}\oplus W_{2,7}\oplus W_{3,4}\oplus W_{4,2}\oplus W_{5,1}, \\
        &W_{1,12}\oplus W_{2,8}\oplus W_{3,5}\oplus W_{4,3}\oplus W_{6,1}, \\
        &W_{1,13}\oplus W_{2,9}\oplus W_{3,6}\oplus W_{5,3}\oplus W_{6,2}, \\
        &W_{1,14}\oplus W_{2,10}\oplus W_{4,6}\oplus W_{5,5}\oplus W_{6,4}, \\
        &W_{1,15}\oplus W_{3,10}\oplus W_{4,9}\oplus W_{5,8}\oplus W_{6,7}, \\
        &W_{2,15}\oplus W_{3,14}\oplus W_{4,13}\oplus W_{5,12}\oplus W_{6,11},
	\end{split}
\end{equation*}	to all mirror sites. So the transmission load of the first layer is $R_1=\frac{6}{15}=0.4$.
\item The messages $S_{k_1}({\bf d})$ sent by mirror site $k_1$ consists of the coded packets $X_{k_1,s}$ generated by $X_s$ from the server and the packets cached by mirror site $k_1$ where $s\in\mathcal{S}_{k_1}\setminus\mathcal{S}_\text{m}$, and the coded packets $X_{k_1,s'}$ generated only by the packets cached by mirror site $k_1$ where $s'\in \mathcal{S}_{k_1}\bigcap \mathcal{S}_{\text{m}}$.

From Lines 17-20 and \eqref{eg-HPDA}, we have $\mathcal{S}_{1}\setminus\mathcal{S}_\text{m}=[1:6]$, so the mirror site 1 sends the coded packets $X_{1,s}$:
%\begin{IEEEeqnarray}{rCl}
%X^{(1)}_{1}=&&~\left(W_{1,11}\oplus W_{2,7}\oplus W_{3,4}\oplus W_{4,2}\oplus W_{5,1}\right)\nonumber\\&&\quad \oplus\left(W_{3,4}\oplus W_{4,2}\oplus W_{5,1}\right)\nonumber\\
%=&&~W_{1,11}\oplus W_{2,7}\nonumber
%\end{IEEEeqnarray}
\begin{IEEEeqnarray}{rCl}
&&~\left(W_{1,11}\oplus W_{2,7}\oplus W_{3,4}\oplus W_{4,2}\oplus W_{5,1}\right)  \oplus\left(W_{3,4}\oplus W_{4,2}\oplus W_{5,1}\right)\nonumber
=~W_{1,11}\oplus W_{2,7}\nonumber,\\
&&~\left(W_{1,12}\oplus W_{2,8}\oplus W_{3,5}\oplus W_{4,3}\oplus W_{6,1}\right)  \oplus\left(W_{3,5}\oplus W_{4,3}\oplus W_{6,1}\right)\nonumber
=~W_{1,12}\oplus W_{2,8}\nonumber,\\
&&~\left(W_{1,13}\oplus W_{2,9}\oplus W_{3,6}\oplus W_{5,3}\oplus W_{6,2}\right)  \oplus\left(W_{3,6}\oplus W_{5,3}\oplus W_{6,2}\right)\nonumber
=~W_{1,13}\oplus W_{2,9}\nonumber,\\
&&~\left(W_{1,14}\oplus W_{2,10}\oplus W_{4,6}\oplus W_{5,5}\oplus W_{6,4}\right)  \oplus\left(W_{4,6}\oplus W_{5,5}\oplus W_{6,4}\right)\nonumber
=~W_{1,14}\oplus W_{2,10}\nonumber,\\
%&&~\left(W_{1,15}\oplus W_{3,10}\oplus W_{4,9}\oplus W_{5,8}\oplus W_{6,7}\right)\nonumber\\&&\quad \nonumber
%\\
&&~W_{1,15}\oplus W_{3,10}\oplus W_{4,9}\oplus W_{5,8}\oplus W_{6,7}\nonumber, \\
&&~W_{2,15}\oplus W_{3,14}\oplus W_{4,13}\oplus W_{5,12}\oplus W_{6,11}\nonumber,
\end{IEEEeqnarray}
since it can receive the coded packets $X_{s}$ from server and it has cached packets $\{W_{n,1}$, $W_{n,2}$, $W_{n,3}$, $W_{n,4}$, $W_{n,5}$, $W_{n,6}\ |\ n\in[6]\}$.
%Other mirror sites forward the decoding signal $X^{(1)}_{k_1},k_1\in[K_1]$ to its' attached users in the same manner.
Then user $\text{U}_{1,1}$ can decode $W_{1,11}$, $W_{1,12}$, $W_{1,13}$, $W_{1,14}$ and $W_{1,15}$ from $X_{1,s}$ since it has cached $\{W_{n,7}, W_{n,8}, W_{n,9}, W_{n,10}\ |\  n\in[6]\}$. Similarly, $\text{U}_{1,2}$ can also recover some of its required file packets from $X_{1,s}$ and its own cache memory respectively. Since $|\mathcal{S}_1\setminus \mathcal{S}_{\text{m}}|=6$ there are $6$ coded packets sent by mirror site $1$.

Now we see the coded packets $X_{k_1,s'}$ sent by mirror site $k_1$. From Lines 21-24 and \eqref{eg-HPDA}, we have $\mathcal{S}_1\bigcap \mathcal{S}_{\text{m}}=[7:18]$, so mirror site $1$ sends $X_{1,s'}$:
    \begin{eqnarray*}
    % \nonumber to remove numbering (before each equation)
      &W_{1,1}, W_{1,2}, W_{1,3}, W_{1,4}, W_{1,5}, W_{1,6}, W_{2,1}, W_{2,2},\\
      &W_{2,3}, W_{2,4}, W_{2,5}, W_{2,6},
    \end{eqnarray*}
    to users $\text{U}_{1,1}$ and $\text{U}_{1,2}$ from its own cached packets. Clearly each user can directly get the above packets and there are $12$ packets. Then the transmission amount by mirror site $1$ is $\frac{6+12}{15}=1.2$, which is the transmission load of the second layer $R_2=1.2$.
\end{itemize}
\end{itemize}
Actually, $\mathbf{P}$ in \eqref{eg-HPDA} is obtained by Theorem \ref{th-maint-2}, whose $R_1$ achieves the minimum load under the restriction of parameters specified by $\mathbf{P}$. In \cite{KNMD,WWCY}, by the exhaustive computer searches for the values of $\alpha$ and $\beta$ to find the minimum transmission load of the first layer under the same circumstance, we have $R^{\text{KNMD}}_1=0.73$ from \eqref{DHCC-B} and $R^{\text{WWCY}}_1=0.55$ from \eqref{eq-load-W}. Clearly $R_1<R^{\text{WWCY}}_1<R^{\text{KNMD}}_1$.
\end{example}

\begin{remark}
\label{re-relationship-HPDA-HCCS}
From Algorithm \ref{alg:H-PDA} and Example \ref{ex-1}, we have the following relationship between $(K_1,K_2;F;Z_1,Z_2;\mathcal{S}_\text{m}$, $\mathcal{S}_1,\ldots,\mathcal{S}_{K_1})$ HPDA and its realized $F$-division coded caching scheme for the $(K_1,K_2; M_1,M_2; N)$ hierarchical coded caching problem where $\frac{M_1}{N}=\frac{Z_1}{F}$, $\frac{M_2}{N}=\frac{Z_2}{F}$.
\begin{itemize}
\item An $F\times K_1$ mirror sites-placement array $\mathbf{P}^{(0)}$ consists of $*$ and null entries. The column labels represent the mirror site indices while the row labels represent the packet indices. If entry $p^{(0)}_{j,k_1}= *$, $j\in[F]$ and $k_1 \in [K_1]$, then mirror site $k_1$ has already cached the $j$-th packet of all the files in server. All mirror sites have the same memory ratio $\frac{M_1}{N}=\frac{Z_1}{F}$ according to B1 of Definition \ref{def-H-PDA}.
\item An $F\times K_1K_2$ users-placement and delivery array $(\mathbf{P}^{(1)},\ldots,\mathbf{P}^{(K_1)})$ consists of $ \{$*$\}\bigcup\{\bigcup_{k_1=1}^{K_1}\mathcal{S}_{k_1}\}$. The column labels represent the user indices while the row labels represent the packet indices. If entry $p^{(k_1)}_{j,k_2} =*$, $j\in[F]$, $k_1\in[K_1]$, $k_2 \in [K_2]$, user $\text{U}_{k_1,k_2}$ has already cached the $j$-th packet of all the files in server. All the users have the same memory ratio $\frac{M_2}{N}=\frac{Z_2}{F}$ according to B$2$ of Definition \ref{def-H-PDA}. The integers in $\left(\bigcup_{k_1=1}^{K_1}\mathcal{S}_{k_1}\right)\setminus\mathcal{S}_\text{m}$ indicate the broadcast packets transmitted by the server, and the integers in $\mathcal{S}_{k_1}$, $k_1\in[K_1]$, represent the broadcast packets sent by the mirror site $k_1$. In addition the integers in $\mathcal{S}_\text{m}$ represent the multicast messages sent only by the mirror sites.
%\item The $k_1$-th column of $\mathbf{P}^{(0)}$ denotes the placement array of mirror site $k_1$, whose connected users' placement and delivery array is $\mathbf{P}^{(k_1)}$. That is to say, the star entry in $k_1$-th column of $\mathbf{P}^{(0)}$ could be used to decode the signals sent to users in $\text{U}_{k_1}$.
\item The property B2 and B4 of Definition \ref{def-H-PDA} guarantee that each user $\text{U}_{k_1,k_2}$ can recover its requested packet, since user $\text{U}_{k_1,k_2}$ or mirror site $k_1$ has cached all the other packets in the broadcast message except the one requested by $\text{U}_{k_1,k_2}$. More precisely, if entry $p^{(k_1)}_{j,k_2} =s\in\left(\bigcup_{k_1=1}^{K_1}\mathcal{S}_{k_1}\right)\setminus\mathcal{S}_\text{m}$, $j\in[F]$, $k_1\in[K_1]$, $k_2 \in [K_2]$, then the $j$-th packet of all files is not stored by user $\text{U}_{k_1,k_2}$. In this case the server broadcasts a coded packet (i.e., the XOR of all the requestd packets indicated by $s$) to the mirror sites. Assume that the packet required by user $\text{U}_{k_1,k_2}$, say $W_{d_{k_1,k_2},j}$, and any other packet, say $W_{d_{k'_1,k'_2},j'}$, are included in the coded signal $X_s$ listed in Line 13 of Algorithm \ref{alg:H-PDA}. Then we have $p^{(k_1)}_{j,k_2}=p^{(k'_1)}_{j',k'_2}=s$. If $k_1=k_1'$, then from the Condition C$3$ of definition \ref{def-PDA} we have $p^{(k_1)}_{j',k_2}=p^{(k_1')}_{j,k_2'}=*$ because $\mathbf{P}^{(k_1)}$ is a PDA, which means user $\text{U}_{k_1,k_2}$ has cached the packet $W_{d_{k'_1,k'_2},j'}$. If $k_1\neq k_1'$ and $p^{(k_1)}_{j',k_2}\neq *$, then from Condition B$4$ of Definition \ref{def-H-PDA} we have $p^{(0)}_{j',k_1}=*$. This implies that mirror site $k_1$ has cached $W_{d_{k'_1,k'_2},j'}$. From Line 19 of Algorithm \ref{alg:H-PDA}, the coded signal $X_{k_1,s}$, which is transmitted to user $\text{U}_{k_1,k_2}$ by mirror site $k_1$, is generated by cancelling the $W_{d_{k'_1,k'_2},j'}$ by mirror site $k_1$. So $X_{k_1,s}$ only contains one packet required by user $\text{U}_{k_1,k_2}$ and the packets which have been cached by user $\text{U}_{k_1,k_2}$. Clearly user $\text{U}_{k_1,k_2}$ can decode its requiring packet $W_{d_{k_1,k_2},j}$. So the number of packets transmitted by the server is $|\bigcup_{k_1=1}^{K_1}\mathcal{S}_{k_1}|-|\mathcal{S}_\text{m}|$. Then the transmission load from server to mirror sites is $R_1=\frac{|\bigcup_{k_1=1}^{K_1}\mathcal{S}_{k_1}|-|\mathcal{S}_\text{m}|}{F}$. While if $s'\in \mathcal{S}_\text{m}\bigcap\mathcal{S}_{k_1}$, by the Condition B$3$ of Definition \ref{def-H-PDA} the mirror site $k_1$ has already cached all the required packets labeled by $s'$. So the mirror site can broadcast a multicast message $X_{k_1,s'}$ (i.e. the XOR of all the requested packets indicated by $s'$) to the user in $\mathcal{U}_{k_1}$. Then the number of packets transmitted simply by the mirror site $k_1$ is $|\mathcal{S}_{k_1}|$. This implies that the transmission load from mirror site $k_1$ to its attached users in $\mathcal{U}_{k_1}$ is $R_2=\max_{k_1\in[K_1]}\left\{\  \frac{\mid\mathcal{S}_{k_1}\mid}{F}\ \right\}$.

\end{itemize}
\end{remark}
%In the following we will show that given a $(K_1,K_2;F;Z_1,Z_2;\{\mathcal{S}_0,\mathcal{S}_1,\ldots,\mathcal{S}_{K_1}\})$ HPDA $\mathbf{P}$, a $(K_1,K_2;M_1,M_2;N)$ coded caching scheme with $\frac{Z_1}{F}=\frac{M_1}{N}$ and $\frac{Z_2}{F}=\frac{M_2}{N}$ can be obtained by Algorithm \ref{alg:H-PDA}.

From the above investigations in Remark \ref{re-relationship-HPDA-HCCS}, we can obtain the following result.
\begin{theorem}
\label{th-main-result}
Given a $(K_1,K_2;F;Z_1,Z_2;\mathcal{S}_\text{m},\mathcal{S}_1,\ldots,\mathcal{S}_{K_1}) $ HPDA $\mathbf{P}=(\mathbf{P}_0,\mathbf{P}_1,\ldots,\mathbf{P}_{K_1})$,   we can obtain an $F$-division $(K_1,K_2;M_1,M_2;N)$ coded caching scheme with $\frac{M_1}{N}=\frac{Z_1}{F}$, $\frac{M_2}{N}=\frac{Z_2}{F}$ and transmission load $R_1=\frac{|\bigcup_{k_1=1}^{K_1}\mathcal{S}_{k_1}|-|\mathcal{S}_\text{m}|}{F}$, $R_2=\max_{k_1\in[K_1]}\left\{\  \frac{\mid\mathcal{S}_{k_1}\mid}{F}\ \right\}$.
\end{theorem}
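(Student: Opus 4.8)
The plan is to show that the pair (placement, delivery) specified by Algorithm~\ref{alg:H-PDA} is a valid $F$-division $(K_1,K_2;M_1,M_2;N)$ coded caching scheme for the worst-case demands (so that all requested files are distinct and no two XORed packets coincide), and then to read off its memory ratios and its two transmission loads. Concretely there are three things to verify: (i) the cache occupancy at the mirror sites and at the users; (ii) that every requested packet is recovered; and (iii) the number of packets broadcast by the server and by each mirror site.

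I would dispatch the easy parts first. By Line~4, mirror site $k_1$ stores, for every file, exactly the packets indexed by the $*$-rows of column $k_1$ of $\mathbf{P}^{(0)}$; by B1 there are $Z_1$ of them, so it occupies $NZ_1B/F$ bits, i.e.\ $M_1/N=Z_1/F$. Similarly, by B2 the subarray $\mathbf{P}^{(k_1)}$ is a $(K_2,F,Z_2,|\mathcal{S}_{k_1}|)$ PDA, so by C1 each column has $Z_2$ stars, user $\text{U}_{k_1,k_2}$ occupies $NZ_2B/F$ bits, i.e.\ $M_2/N=Z_2/F$. Moreover, by B2 every entry of $\mathbf{P}^{(k_1)}$ is either $*$ or an element of $\mathcal{S}_{k_1}$, so each packet $W_{d_{k_1,k_2},j}$ is either already in $\text{U}_{k_1,k_2}$'s cache (when $p^{(k_1)}_{j,k_2}=*$) or is labelled by an integer $s\in\mathcal{S}_{k_1}\setminus\mathcal{S}_\text{m}$ or $s\in\mathcal{S}_{k_1}\cap\mathcal{S}_\text{m}$; the two loops of the mirror-delivery procedure together with the server-delivery procedure cover exactly these two cases, so it suffices to prove that every labelled packet is decoded.

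The heart of the argument is decodability, which I would prove by fixing an entry $p^{(k_1)}_{j,k_2}=s$ and showing $\text{U}_{k_1,k_2}$ recovers $W_{d_{k_1,k_2},j}$. If $s\in\mathcal{S}_\text{m}$, then by B3 the symbol $s$ occurs only inside $\mathbf{P}^{(k_1)}$ and each of its occurrences lies in a $*$-row of column $k_1$ of $\mathbf{P}^{(0)}$, so mirror site $k_1$ holds all the involved packets and forms $X_{k_1,s}$ locally; every other term $W_{d_{k_1,k_2'},j'}$ in it has $p^{(k_1)}_{j',k_2'}=s$ with $(j',k_2')\neq(j,k_2)$, whence C3 applied to the PDA $\mathbf{P}^{(k_1)}$ gives $p^{(k_1)}_{j',k_2}=*$, i.e.\ that term is cached by $\text{U}_{k_1,k_2}$. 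If $s\notin\mathcal{S}_\text{m}$, the server broadcasts $X_s$ (the XOR over all entries equal to $s$), and mirror site $k_1$ cancels precisely the terms $W_{d_{k_1',k_2'},j'}$ with $k_1'\neq k_1$ and $p^{(0)}_{j',k_1}=*$ — which it can, since it has cached those packets — leaving $X_{k_1,s}$. For $\text{U}_{k_1,k_2}$ the surviving terms other than $W_{d_{k_1,k_2},j}$ are of two kinds: those indexed by $k_1$ again, for which C3 on $\mathbf{P}^{(k_1)}$ gives $p^{(k_1)}_{j',k_2}=*$; and those indexed by $k_1'\neq k_1$ with $p^{(0)}_{j',k_1}\neq*$, for which the contrapositive of B4 (applied to $p^{(k_1)}_{j,k_2}=p^{(k_1')}_{j',k_2'}=s$) forces $p^{(k_1)}_{j',k_2}=*$. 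Either way every other term is cached by $\text{U}_{k_1,k_2}$, so it recovers $W_{d_{k_1,k_2},j}$. I expect this case analysis — in particular correctly identifying which terms of $X_s$ mirror site $k_1$ can and must cancel, and then checking that the residue is exactly what every user of $\mathcal{U}_{k_1}$ already holds — to be the main obstacle; the rest is bookkeeping.

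Finally I would count transmissions. The server sends one packet $X_s$ for each $s\in\left(\bigcup_{k_1=1}^{K_1}\mathcal{S}_{k_1}\right)\setminus\mathcal{S}_\text{m}$, and since B3 implies $\mathcal{S}_\text{m}\subseteq\bigcup_{k_1=1}^{K_1}\mathcal{S}_{k_1}$, that amounts to $|\bigcup_{k_1=1}^{K_1}\mathcal{S}_{k_1}|-|\mathcal{S}_\text{m}|$ packets regardless of $\mathbf{d}$; hence $R_1=(|\bigcup_{k_1=1}^{K_1}\mathcal{S}_{k_1}|-|\mathcal{S}_\text{m}|)/F$. Mirror site $k_1$ sends one packet for each $s\in\mathcal{S}_{k_1}\setminus\mathcal{S}_\text{m}$ and one for each $s'\in\mathcal{S}_{k_1}\cap\mathcal{S}_\text{m}$, a total of $|\mathcal{S}_{k_1}|$ packets, so $R_2=\max_{k_1\in[K_1]}|\mathcal{S}_{k_1}|/F$. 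Combining with the memory computation above yields the stated scheme.
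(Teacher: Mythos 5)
Your proposal is correct and follows essentially the same route as the paper's own justification (given in Remark~\ref{re-relationship-HPDA-HCCS}): memory ratios from B1 and C1 of B2, decodability via C3 for collisions within one $\mathbf{P}^{(k_1)}$ and via B3/B4 for the mirror-only multicasts and cross-mirror collisions, and the same counting of $|\bigcup_{k_1}\mathcal{S}_{k_1}|-|\mathcal{S}_\text{m}|$ server packets and $|\mathcal{S}_{k_1}|$ mirror packets. Your reading of the mirror's cancellation step (cancelling exactly the foreign terms lying in $*$-rows of its own column of $\mathbf{P}^{(0)}$) matches the intended semantics of Line~19 of Algorithm~\ref{alg:H-PDA}, so no gap remains.
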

From Theorem \ref{th-main-result}, we can obtain a hierarchical coded caching scheme by constructing an appropriate HPDA. So in this paper we focus on constructing HPDA to get its realized scheme with better performance compared with the previously known results.

\section{Main Results}
\label{sec:RESULT}
%In this section we propose two classes of hierarchical coded caching schemes by constructing HPDA. The first scheme (grouping scheme) with the minimum transmission load $R^*_1$ is obtained via the MN PDA. Due to the parameter limitation of the grouping scheme, we propose the second scheme (hybrid scheme) which combines two PDAs where one is regarded as outer structure array and the other one is regarded as inner structure array. The hybrid scheme has a flexible choice of transmission loads and subpacketization due to the flexible choice of different PDAs.

In this section, we first present new upper bounds on the optimal transmission loads $(R^*_1,R^*_2)$ based on two classes of HPDAs, and then compare these bounds with previously known results.% achievab
%Given a MN PDA, the following scheme can be obtained by dividing the MN PDA with $K$ users into $K_1$ groups each of which has exactly $K_2$ users. The proof of which is provided in Section \ref{sec:5}.
\begin{theorem}
\label{th-maint-2}
    For any positive integers $K_1$, $K_2$, $t$, $K_2<t<K_1K_2$, there exists   a $(K_1,K_2$; ${K_1K_2\choose t}$; ${K_1K_2-K_2\choose t-K_2}$, ${K_1K_2-1\choose t-1}-{K_1K_2-K_2\choose t-K_2}$; $\mathcal{S}_\text{m},\mathcal{S}_1$, $\ldots$, $\mathcal{S}_{K_1})$ HPDA,  which leads to an $F$-division $(K_1$, $K_2$; $M_1$, $M_2$; $N)$ coded caching scheme with
\begin{subequations}
\begin{IEEEeqnarray}{rCl}
  \label{th-para}
      &&\text{memory ratios}:  \frac{M_1}{N}=\frac{{K_1K_2-K_2\choose t-K_2}}{{K_1K_2\choose t}} ,\nonumber \\
      && \ \ \ \ \ \ \ \ \ \ \ \ \ \ \ \ \ \ \ \frac{M_2}{N}=\frac{t}{K_1K_2}-\frac{{K_1K_2-K_2\choose t-K_2}}{{K_1K_2\choose t}},~\quad\label{eqRatioThm2} \\[0.2cm]
      &&\text{subpacketization}:  F={K_1K_2\choose t},\label{eqPackThm2}\\[0.2cm]
      &&\text{transmission
    loads}:  R_1=\frac{K_1K_2-t}{t+1},\label{eqR1Thm2}\\ &&\hspace{1ex}R_2=\frac{K_1K_2-t}{t+1}-\frac{{K_1K_2-K_2\choose t+1}}{{K_1K_2\choose t}}+\frac{{K_1K_2-K_2\choose t-K_2}K_2}{{K_1K_2\choose t}}.\label{eqR2Thm2}
\end{IEEEeqnarray}
\end{subequations}
\end{theorem}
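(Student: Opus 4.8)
The plan is to build the HPDA explicitly from the MN PDA and then hand part of the caching/delivery work to the mirror layer. Let $\mathbf{Q}=\big(\mathbf{Q}(\mathcal{T},k)\big)_{\mathcal{T}\in\binom{[K_1K_2]}{t},\,k\in[K_1K_2]}$ be the MN PDA for a single-layer system with $K_1K_2$ users and parameter $t$ (Construction \ref{con-MN}), so its rows are indexed by the $t$-subsets $\mathcal{T}\subseteq[K_1K_2]$. Partition the $K_1K_2$ columns into $K_1$ consecutive blocks $\mathcal{G}_1,\dots,\mathcal{G}_{K_1}$ of size $K_2$, with $\mathcal{G}_{k_1}$ playing the role of $\mathcal{U}_{k_1}$. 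Define $\mathbf{P}^{(0)}$ by $p^{(0)}_{\mathcal{T},k_1}=*$ precisely when $\mathcal{G}_{k_1}\subseteq\mathcal{T}$ (and null otherwise). Define $\mathbf{P}^{(k_1)}$ as the restriction of $\mathbf{Q}$ to the $K_2$ columns of $\mathcal{G}_{k_1}$, with one modification: in every row $\mathcal{T}$ with $\mathcal{G}_{k_1}\subseteq\mathcal{T}$ — where all $K_2$ entries of the restriction equal $*$ — overwrite these $K_2$ stars by fresh symbols, chosen pairwise distinct across all such rows and all $k_1$. Let $\mathcal{S}_{\mathrm{m}}$ be the set of all these fresh symbols, and $\mathcal{S}_{k_1}$ the set of integers occurring in $\mathbf{P}^{(k_1)}$.

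Next I would check conditions B1--B4 of Definition \ref{def-H-PDA}. B1 is the count $|\{\mathcal{T}:\mathcal{G}_{k_1}\subseteq\mathcal{T}\}|=\binom{K_1K_2-K_2}{t-K_2}=Z_1$. For B2, the stars of column $k\in\mathcal{G}_{k_1}$ in $\mathbf{P}^{(k_1)}$ are exactly the rows with $k\in\mathcal{T}$ but $\mathcal{G}_{k_1}\not\subseteq\mathcal{T}$, of which there are $\binom{K_1K_2-1}{t-1}-\binom{K_1K_2-K_2}{t-K_2}=Z_2$; condition C3 then only needs checking among the surviving MN integers (the fresh symbols are all distinct, hence never collide), and the key observation is that any row $\mathcal{T}$ carrying a surviving MN integer in a column $k\in\mathcal{G}_{k_1}$ has $k\notin\mathcal{T}$, hence $\mathcal{G}_{k_1}\not\subseteq\mathcal{T}$, so $\mathcal{T}$ is not an overwritten row and the two ``completion'' entries demanded by C3 are still the very stars they were in $\mathbf{Q}$. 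B3 holds because each fresh symbol lies in a single $\mathbf{P}^{(k_1)}$ and in a row $\mathcal{T}\supseteq\mathcal{G}_{k_1}$, for which $p^{(0)}_{\mathcal{T},k_1}=*$. For B4, if an integer appears in $\mathbf{P}^{(k_1)}$ at $(\mathcal{T},k)$ and in $\mathbf{P}^{(k_1')}$ at $(\mathcal{T}',k')$ with $k_1\ne k_1'$, it must be an MN symbol $\phi_{t+1}(\mathcal{V})$ with $\mathcal{V}=\mathcal{T}\sqcup\{k\}=\mathcal{T}'\sqcup\{k'\}$, $k\in\mathcal{G}_{k_1}$, $k'\in\mathcal{G}_{k_1'}$; since $k\ne k'$ we get $k\in\mathcal{T}'$, so $p^{(k_1)}_{\mathcal{T}',k}$ is either a star (if $\mathcal{G}_{k_1}\not\subseteq\mathcal{T}'$) or a fresh symbol, and in the latter case $p^{(0)}_{\mathcal{T}',k_1}=*$ automatically; symmetrically for $p^{(k_1')}_{\mathcal{T},k'}$. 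This is exactly what B4 requires, so all four conditions hold.

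Finally, with the HPDA in hand Theorem \ref{th-main-result} yields an $F$-division scheme with $F=\binom{K_1K_2}{t}$, $M_1/N=Z_1/F$ and $M_2/N=Z_2/F$, where the displayed rewriting in \eqref{eqRatioThm2} uses $\binom{K_1K_2-1}{t-1}/\binom{K_1K_2}{t}=t/(K_1K_2)$. For the loads, each MN symbol $\phi_{t+1}(\mathcal{V})$, $\mathcal{V}\in\binom{[K_1K_2]}{t+1}$, occurs in $\mathbf{P}^{(k_1)}$ iff $\mathcal{V}\cap\mathcal{G}_{k_1}\ne\emptyset$, so $\big|\bigcup_{k_1}\mathcal{S}_{k_1}\big|-|\mathcal{S}_{\mathrm{m}}|=\binom{K_1K_2}{t+1}$, giving $R_1=\binom{K_1K_2}{t+1}/\binom{K_1K_2}{t}=(K_1K_2-t)/(t+1)$ as in \eqref{eqR1Thm2}; and for every $k_1$, $|\mathcal{S}_{k_1}|=\big(\binom{K_1K_2}{t+1}-\binom{K_1K_2-K_2}{t+1}\big)+K_2\binom{K_1K_2-K_2}{t-K_2}$, which after dividing by $F$ and using $\binom{K_1K_2}{t+1}/\binom{K_1K_2}{t}=(K_1K_2-t)/(t+1)$ is precisely \eqref{eqR2Thm2}. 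I expect the main difficulty to be the bookkeeping inside B2 and B4: one must confirm that replacing the all-star sub-blocks by fresh symbols never destroys an instance of C3 (or of B4) that $\mathbf{Q}$ had guaranteed, and the disjointness observation above (overwritten rows are, column by column, disjoint from rows carrying surviving MN integers) is what makes this go through cleanly; the hypothesis $K_2<t<K_1K_2$ only keeps the resulting memory sizes and transmission loads in the non-degenerate range.
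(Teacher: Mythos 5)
Your construction is exactly the paper's: partition the $(K_1K_2,\binom{K_1K_2}{t},\binom{K_1K_2-1}{t-1},\binom{K_1K_2}{t+1})$ MN PDA into $K_1$ column blocks, star $\mathbf{P}^{(0)}$ at the all-star rows of each block, and overwrite those all-star sub-blocks with fresh distinct integers forming $\mathcal{S}_{\mathrm{m}}$; your verification of B1--B4 (in particular the observation that overwritten rows are disjoint from rows carrying surviving MN integers in each block) and your load computations match the paper's Section \ref{sub-proof-th2} step for step. The proposal is correct and takes essentially the same approach as the paper.
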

\begin{proof}
See the proof in Section \ref{sec:5}.
\end{proof}

\begin{corollary}
\label{remark-optimal}
For a two-level hybrid   network with  {memory ratios} satisfying \eqref{eqRatioThm2}, the transmission load $R_1$ in \eqref{eqR1Thm2} is  \emph{optimal}  under the uncoded data placement, i.e.,
$$ R_1^*=\frac{K_1K_2(1-\frac{M_1+M_2}{N})}{K_1K_2\frac{M_1+M_2}{N}+1}.$$
\begin{proof}
The achievability proof holds directly from Theorem \ref{th-maint-2}. For the converse proof, please refer to  Appendix \ref{appendix-optimal}.
\end{proof}
  It can be checked that the HPDA  in  \eqref{eg-HPDA} in  Example \ref{ex-1} is in fact a specific HPDA of Theorem \ref{th-maint-2}. In Section \ref{sub-sketch-group}, we show how to construct  the HPDA   in \eqref{eg-HPDA} based on a $(6,15,10,6)$ MN PDA.

  % \eqref{eg-HPDA} satisfies conditions \eqref{eqRatioThm2} and \eqref{eqPackThm2}, and leads to the loads \eqref{eqR1Thm2} and \eqref{eqR2Thm2}.   The construction of HPDA array in \eqref{eg-HPDA} is based on a $(6,15,10,6)$ MN PDA, and  is  described in details in Section \ref{sub-sketch-group}.
\end{corollary}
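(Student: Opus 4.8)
The achievability of Corollary~\ref{remark-optimal} follows directly from Theorem~\ref{th-maint-2}, so the plan is to prove the matching converse: for \emph{every} $F$-division scheme with uncoded placement, the first-layer load obeys
\begin{equation*}
R_1 \;\ge\; \frac{K_1K_2\left(1-\frac{M_1+M_2}{N}\right)}{K_1K_2\frac{M_1+M_2}{N}+1}
\;=\; r_c\!\left(\frac{M_1+M_2}{N},K_1K_2\right).
\end{equation*}
The central idea is to collapse the two-layer network into a single-layer shared-link caching problem with $K_1K_2$ users, in which the effective cache of each user is the union of its own cache and the cache of the mirror site it is attached to.

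The first step is a data-processing (genie) argument showing that the server broadcast already determines every user's file once the combined caches are available. Fix any demand vector $\mathbf{d}$ with distinct entries, let $X$ denote the server broadcast of $S(\mathbf{d})$ packets, and recall from the system model that mirror site $k_1$ produces its transmissions $X_{k_1,\cdot}$ as a function of $(X,\mathcal{Z}_{k_1})$ only, while user $\text{U}_{k_1,k_2}$ recovers $W_{d_{k_1,k_2}}$ from $(X_{k_1,\cdot},\mathcal{Z}_{(k_1,k_2)})$. Composing these two steps, each user $\text{U}_{k_1,k_2}$ can reconstruct its requested file from $\bigl(X,\ \mathcal{Z}_{k_1}\cup\mathcal{Z}_{(k_1,k_2)}\bigr)$. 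Hence the single broadcast $X$, of worst-case rate $R_1$, simultaneously serves all $K_1K_2$ users, where user $\text{U}_{k_1,k_2}$ holds the uncoded cache $\mathcal{Z}_{k_1}\cup\mathcal{Z}_{(k_1,k_2)}$ of normalized size at most $\frac{M_1+M_2}{N}$. This exhibits a valid single-layer $(K_1K_2,M_1+M_2,N)$ caching scheme under uncoded placement whose worst-case load is $R_1$.

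The second step is to invoke the known converse for uncoded placement in the single-layer shared-link setting \cite{WTDPP}: since $N\ge K_1K_2$, the minimum worst-case load of a system with $K_1K_2$ users and memory ratio $\mu$ at the corner points equals $r_c(\mu,K_1K_2)$, and MN placement attains it. Because any uncoded scheme has worst-case load no smaller than this minimum, the single-layer instance built above yields $R_1\ge r_c\!\left(\frac{M_1+M_2}{N},K_1K_2\right)$. Finally, \eqref{eqRatioThm2} gives $\frac{M_1+M_2}{N}=\frac{t}{K_1K_2}$, which is exactly a corner point, so the bound evaluates to $\frac{K_1K_2-t}{t+1}$, matching $R_1$ in \eqref{eqR1Thm2} and establishing optimality.

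The main obstacle is justifying the reduction despite cache sharing: the mirror cache $\mathcal{Z}_{k_1}$ is common to all $K_2$ users under mirror $k_1$, so the effective caches of the single-layer instance are correlated and may overlap with the individual user caches. I would stress two facts that make this harmless for a lower bound. First, the converse of \cite{WTDPP} holds for an \emph{arbitrary} uncoded placement and never requires the users' caches to be independent or distinct, so correlation across users is permitted. Second, any overlap between $\mathcal{Z}_{k_1}$ and $\mathcal{Z}_{(k_1,k_2)}$ can only shrink the combined cache, and since $r_c(\cdot,K_1K_2)$ is non-increasing in its first argument, evaluating the converse at the slightly larger value $\frac{M_1+M_2}{N}$ (equivalently, augmenting each user's cache up to exactly that ratio, which never increases the load) only weakens the right-hand side, so the stated bound remains valid and is met with equality at the corner point.
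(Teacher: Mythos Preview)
Your proof is correct and follows essentially the same approach as the paper: reduce to a single-layer $(K_1K_2,M_1+M_2,N)$ system by observing that each user can decode $W_{d_{k_1,k_2}}$ from $(X,\mathcal{Z}_{k_1}\cup\mathcal{Z}_{(k_1,k_2)})$, then invoke the uncoded-placement converse of \cite{WTDPP}. The only presentational difference is that the paper re-derives the genie/super-user bound explicitly (and uses a two-step enhancement to pass from the correlated caches $\mathcal{Z}_{k_1}\cup\mathcal{Z}_{(k_1,k_2)}$ to unconstrained caches of size $M_1+M_2$), whereas you cite the converse directly and handle the cache-overlap/correlation issue via the monotonicity of $r_c(\cdot,K_1K_2)$ and the fact that the converse in \cite{WTDPP} applies to arbitrary uncoded placements; both routes are valid and yield the same bound.
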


The memory ratios in Theorem  \ref{th-maint-2} are constrained by   combination numbers  as shown in \eqref{eqRatioThm2}, which means the rate $R_1={K_1K_2-t}/({t+1})$ may be not always   achievable  for  general memory ratios  $(M_1/N,M_2/N)$. In order to allow flexible   memory ratios, we propose the following upper bound based on a new class of HPDA.    %present new upper bounds on the optimal transmission loads $(R^*_1,R^*_2)$ based on two classes of HPDAs,
\begin{theorem}
\label{th-maint-4}
For any $(K_1,F_1,Z_1,S_1)$ PDA $\mathbf{A}$ and $(K_2,F_2,Z_2,S_2)$ PDA $\mathbf{B}$, there exists a $(K_1,K_2$; $F_1F_2$; $Z_1F_2$, $Z_2F_1$; $\mathcal{S}_\text{m}$, $\mathcal{S}_1$, $\ldots$, $\mathcal{S}_{K_1})$ HPDA, which leads to a $(K_1,K_2;M_1,M_2;N)$ coded caching scheme with memory ratios $\frac{M_1}{N}=\frac{Z_1}{F_1}$,  $\frac{M_2}{N}=\frac{Z_2}{F_2}$ and transmission loads
\begin{eqnarray}
\label{co2-para}
R_1=\frac{S_1S_2}{F_1F_2},~ R_2=\frac{S_2}{F_2}.
\end{eqnarray}
\end{theorem}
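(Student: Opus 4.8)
The plan is to construct the required HPDA explicitly as a "product" of the two given PDAs $\mathbf{A}=(a_{i,k_1})_{i\in[F_1],k_1\in[K_1]}$ and $\mathbf{B}=(b_{j,k_2})_{j\in[F_2],k_2\in[K_2]}$, and then invoke Theorem \ref{th-main-result} to read off the scheme's parameters. I would index the $F_1F_2$ rows by pairs $(i,j)\in[F_1]\times[F_2]$ and the $K_1+K_1K_2$ columns by the mirror-site labels $k_1\in[K_1]$ together with the user labels $(k_1,k_2)\in[K_1]\times[K_2]$. For the mirror-placement block, set $p^{(0)}_{(i,j),k_1}=*$ exactly when $a_{i,k_1}=*$ (ignoring $j$); this gives $Z_1F_2$ stars per column, matching $Z_1F_2$. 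For the user block $\mathbf{P}^{(k_1)}$, I would define $p^{(k_1)}_{(i,j),k_2}=*$ when $b_{j,k_2}=*$, and otherwise assign an integer built from the triple $\big(a_{i,k_1}\ \text{or}\ *,\ b_{j,k_2},\ (\text{position info})\big)$. The key design choice: when $a_{i,k_1}=*$ (so mirror site $k_1$ caches packet-row $i$), the entry should be a "mirror-only" symbol placed in $\mathcal{S}_\text{m}$, keyed by $(i,b_{j,k_2},k_1)$ so that it appears only in $\mathbf{P}^{(k_1)}$ and satisfies B3 automatically; when $a_{i,k_1}$ is an integer, the entry should be a "server" symbol keyed essentially by $(a_{i,k_1},b_{j,k_2})$ so that entries across different mirror sites with equal $a$-value and equal $b$-value get the same label — this is what lets the server XOR them and creates the first-hop multicast gain governed by $S_1$.

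The verification then proceeds condition by condition against Definition \ref{def-H-PDA}. B1 is the star count just checked. For B2, fix $k_1$: within $\mathbf{P}^{(k_1)}$ the star pattern in row $(i,j)$, column $k_2$ depends only on $j,k_2$ and matches $\mathbf{B}$, so each column has $Z_2F_1$ stars; condition C3 for $\mathbf{P}^{(k_1)}$ reduces to C3 for $\mathbf{B}$ once one checks that two equal integer entries in $\mathbf{P}^{(k_1)}$ must come from equal $b$-values in distinct rows/columns of $\mathbf{B}$ (the $a$-coordinate and the row label $i$ are carried along but are forced equal or compatible by the keying), and then the $2\times2$ star structure is inherited from $\mathbf{B}$ — so $\mathbf{P}^{(k_1)}$ is a $(K_2,F_1F_2,Z_2F_1,|\mathcal{S}_{k_1}|)$ PDA. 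B3 holds by construction of the $\mathcal{S}_\text{m}$ symbols: each is keyed by a fixed $k_1$, hence lives in a single $\mathbf{P}^{(k_1)}$, and sits in a row $(i,j)$ with $a_{i,k_1}=*$, i.e. $p^{(0)}_{(i,j),k_1}=*$. The crucial case is B4: take equal integer entries $p^{(k_1)}_{(i,j),k_2}=p^{(k_1')}_{(i',j'),k_2'}=s$ with $k_1\ne k_1'$; by the keying this forces $a_{i,k_1}=a_{i',k_1'}$ (a genuine integer $s_1$, since an $\mathcal{S}_\text{m}$ symbol cannot appear in two different blocks) and $b_{j,k_2}=b_{j',k_2'}$. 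Now I need: if $p^{(k_1)}_{(i',j),k_2}$ is an integer then $p^{(0)}_{(i',j),k_1}=*$, i.e. $a_{i',k_1}=*$. Since $a_{i,k_1}=a_{i',k_1'}=s_1$ are equal integers in distinct columns $k_1\ne k_1'$ of the PDA $\mathbf{A}$, condition C3 for $\mathbf{A}$ gives $a_{i',k_1}=*$ — exactly what is needed. The symmetric statement follows identically. This is the step I expect to be the main obstacle: getting the integer-labeling fine-grained enough that B2 holds (no spurious collisions inside one block) while coarse enough that cross-block collisions force precisely the $\mathbf{A}$-collision pattern that B4 needs; the $\mathcal{S}_\text{m}$ symbols must be threaded through so they never leak across blocks.

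Finally I would compute the parameters. The memory ratios $M_1/N=Z_1F_2/(F_1F_2)=Z_1/F_1$ and $M_2/N=Z_2F_1/(F_1F_2)=Z_2/F_2$ are immediate. By Theorem \ref{th-main-result}, $R_1=(|\bigcup_{k_1}\mathcal{S}_{k_1}|-|\mathcal{S}_\text{m}|)/(F_1F_2)$; the non-$\mathcal{S}_\text{m}$ (server) symbols are indexed, after identifications, by a pair consisting of an integer of $\mathbf{A}$ and an integer of $\mathbf{B}$, of which there are $S_1S_2$ (using that every integer of $\mathbf{A}$ and of $\mathbf{B}$ actually occurs, condition C2), giving $R_1=S_1S_2/(F_1F_2)$. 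For $R_2$, within each $\mathbf{P}^{(k_1)}$ the total number of distinct integer symbols equals, by the PDA property of $\mathbf{P}^{(k_1)}$, exactly $|\mathcal{S}_{k_1}|$; a short count shows $|\mathcal{S}_{k_1}|=S_2F_1$ uniformly in $k_1$ — indeed each integer of $\mathbf{B}$, combined with each of the $F_1$ row-indices $i$ (whether $a_{i,k_1}$ is a star, contributing an $\mathcal{S}_\text{m}$ symbol, or an integer, contributing a server symbol), yields one symbol of $\mathbf{P}^{(k_1)}$ — so $R_2=S_2F_1/(F_1F_2)=S_2/F_2$, as claimed. I would also remark, as the paper announces, that taking $\mathbf{A}$ and $\mathbf{B}$ to be MN PDAs recovers the WWCY scheme, which serves as a consistency check on the construction.
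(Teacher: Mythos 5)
Your proposal is correct and follows essentially the same route as the paper: the paper realizes your abstract "keying" concretely via the inner arrays $\mathbf{I}_1(s)=\mathbf{B}+(s-1)S_2$ for integer entries of $\mathbf{A}$ (your server symbols keyed by $(a_{i,k_1},b_{j,k_2})$) and $\mathbf{I}_2(k_1,f_1)$ with fresh offsets for star entries (your mirror-only symbols), proves the disjointness of these integer sets in a separate lemma, and verifies B1--B4 exactly as you do, with B4 reducing to Condition C3 of $\mathbf{A}$. The parameter counts $|\mathcal{S}_\text{m}|=Z_1K_1S_2$, $|\mathcal{S}_{k_1}|=F_1S_2$ and $R_1=S_1S_2/(F_1F_2)$, $R_2=S_2/F_2$ match the paper's.
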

\begin{proof}
See the proof in Section \ref{sec:6}
\end{proof}

\begin{remark}\label{remark-3} By choosing different PDAs $\mathbf{A}$ and $\mathbf{B}$ to construct the HPDA in Theorem \ref{th-maint-4}, we can obtain different transmission loads and subpacketization levels. In particular,
\begin{itemize}
\item
when $\mathbf{A}$ and $\mathbf{B}$ both are   MN PDA, the corresponding scheme is the same as the WWCY scheme \cite{WWCY}, which achieves the optimal transmission load of $R^*_2$;
\item when    $\mathbf{A}$ and $\mathbf{B}$  are the PDA proposed in \cite{YCTC} and   MN PDA,  respectively,  we could reduce the subpacketization level at cost of   increasing communication loads. We name the corresponding scheme as Scheme I for Theorem \ref{th-maint-4};
\item when  both $\mathbf{A}$ and $\mathbf{B}$  are   PDAs proposed in \cite{YCTC},  we would further reduce the  subpacketization level. We name the corresponding scheme as Scheme II for Theorem \ref{th-maint-4}.
\end{itemize}
\end{remark}

%\subsection{Performance Analysis}
%\label{sub-performance}

In Fig. \ref{per-analisis}, we compare the following schemes: 1) the KNMD scheme  \cite{KNMD}; 2) the  WWCY scheme  \cite{WWCY}; 3) the Scheme for Theorem \ref{th-maint-2};  4) Scheme I for Theorem \ref{th-maint-4}; 5)  Scheme II for Theorem \ref{th-maint-4}.  Note that we can also design our new hybrid schemes  like  the previous works \cite{KNMD,WWCY}, which  divide the system into two subsystems  with splitting parameters ($\alpha, \beta$), and run  the proposed schemes in the first subsystem, and the MN  scheme in the second subsystem. Since the optimal $\alpha$ and $\beta$ are hard to determine due to a tradeoff between $R_1$ and $R_2$, and  the second subsystem  totally ignores mirror sites'  caching abilities, we only focus on    schemes working in the first subsystem, i.e., compare all schemes with  $\alpha=\beta=1$. Besides, due to  the  limitation on memory ratios \eqref{eqRatioThm2}, it is hard to compare all schemes with   general $M_1,M_2\in[0:N]$.  We thus evaluate the performance of various scheme with fixed parameters $(K_1,K_2,N)=(40,20,800)$, and varying
parameters $(M_1,M_2)$ such that the ratios in \eqref{eqRatioThm2} are satisfied.
 More precisely, $M_1/N$ takes the value from $0.2$ to $0.9$ regularly with step size $0.1$, and $M_2/N$ takes the value from $0.72$ to $0.1$ (without fixed step size but on a downward trend), which satisfies  memory ratios \eqref{eqRatioThm2} in Theorem \ref{th-maint-2}.

\begin{figure}
\centering
\subfigure[]
{
\includegraphics[width=5.5cm]{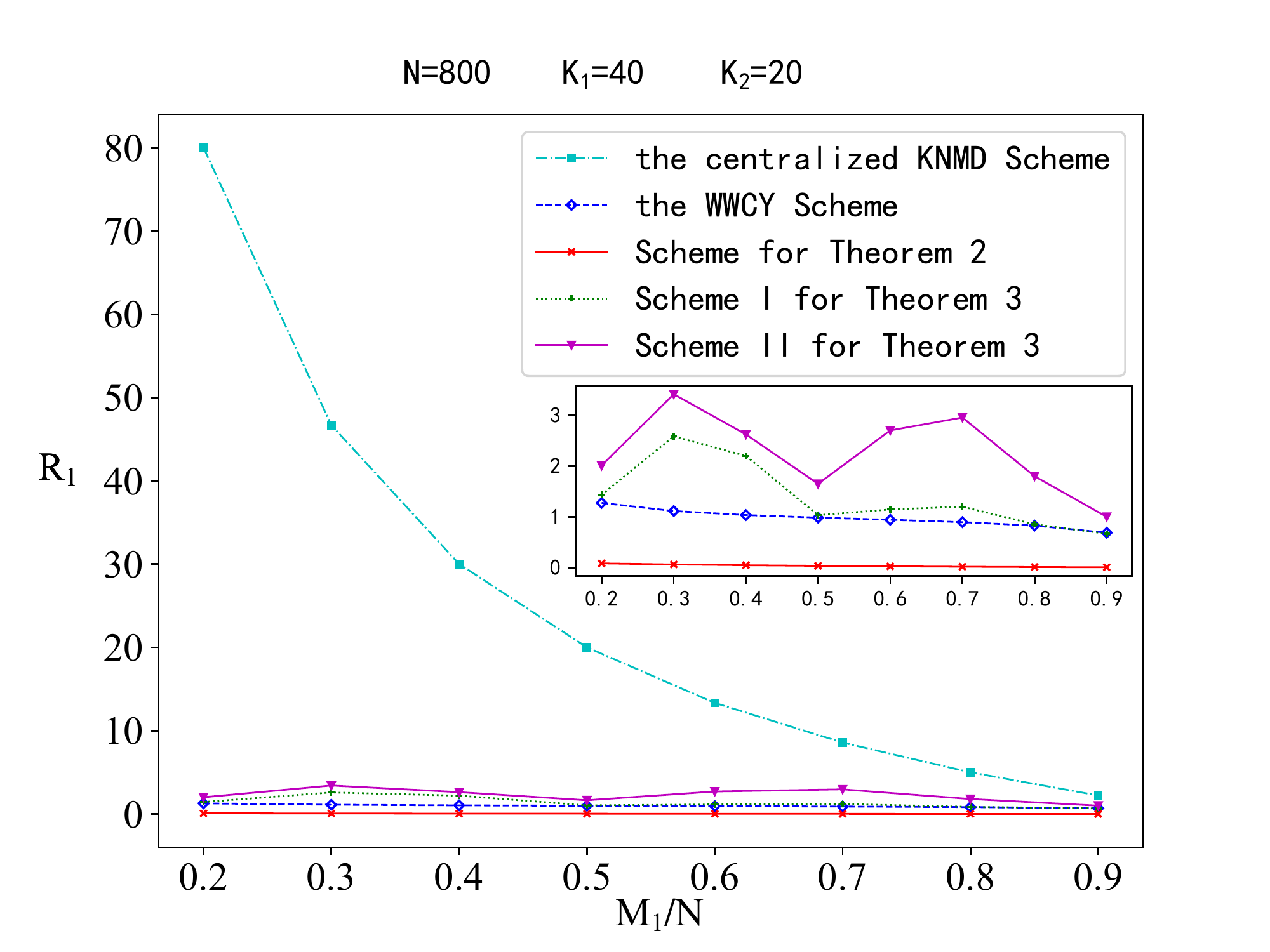}
\label{subfig1}
}
\subfigure[]{
\includegraphics[width=5.5cm]{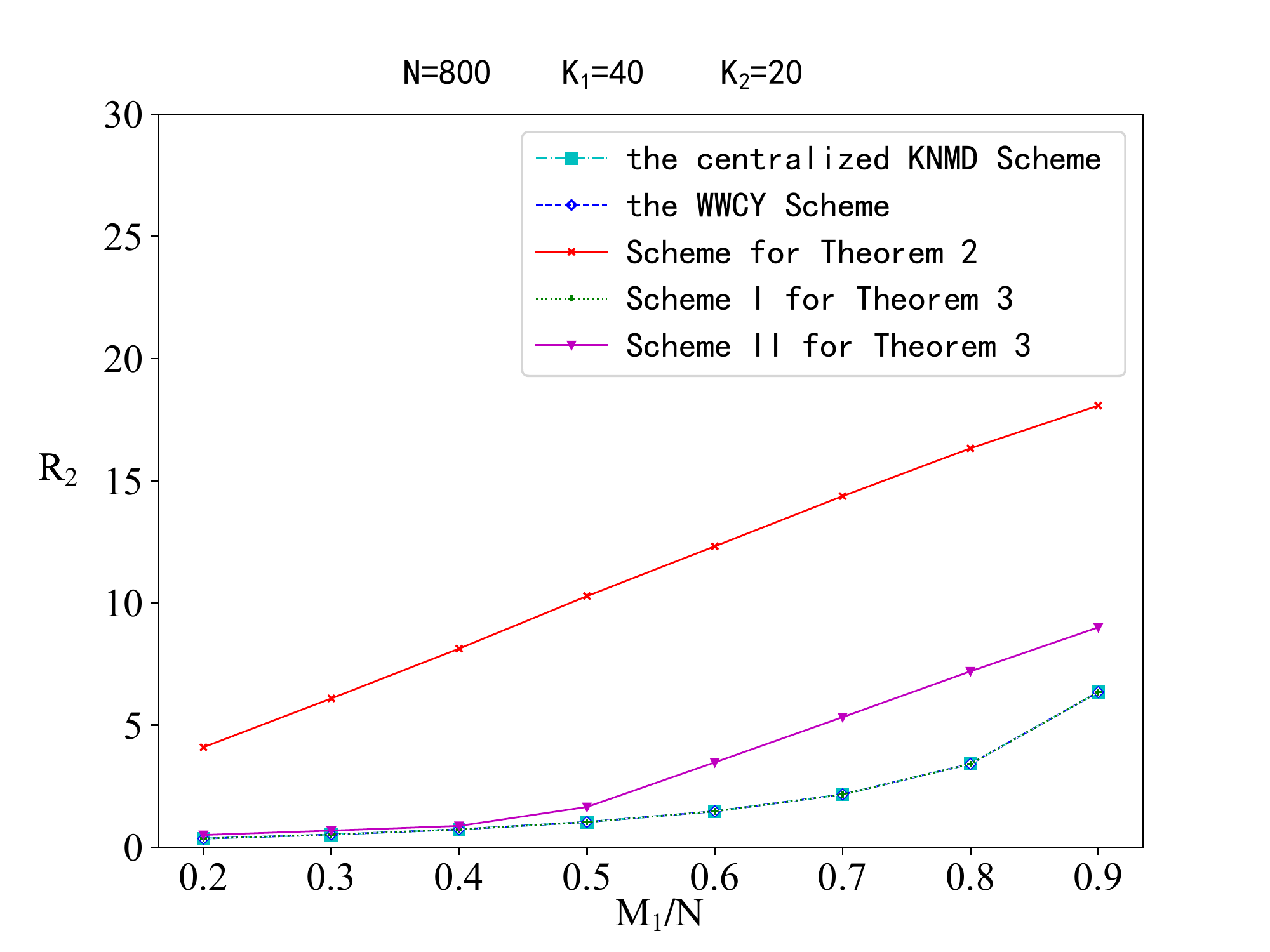}
\label{subfig2}
}
\subfigure[]{
\includegraphics[width=5.5cm]{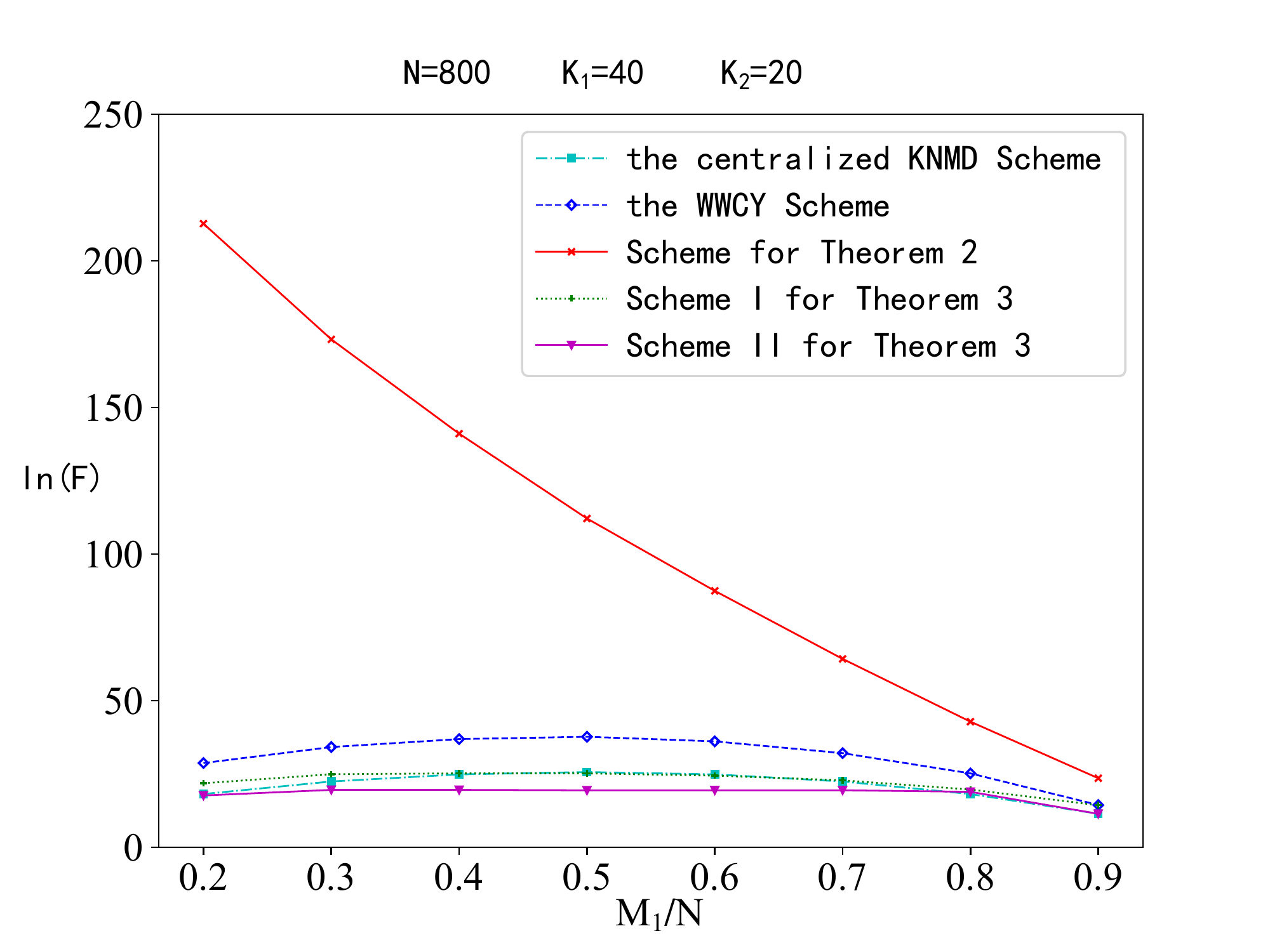}
\label{subfig3}
}
\caption{Performance comparison for a  $(K_1,K_2;M_1,M_2;N)$ caching system with $N=800$, $K_1=40$, $K_2=20$.}
\label{per-analisis}
\end{figure}

%Note that we cannot choose $M_1$ or $M_2$ freely in the scheme for Theorem \ref{th-maint-2}, due to     limitations on memory ratios \eqref{eqRatioThm2},  p it's therefore  difficult to  compare the schemes for Theorem  \ref{th-maint-2} with other schemes with general $M_1$ and $M_2$. In order to compare schemes of Theorem \ref{th-maint-2} and \ref{th-maint-4} with the previous schemes, %N And the total value of $M_1+M_2$ is on an upward trend.   as illustrated in Fig. \ref{per-analisis}

%Note that    the previous schemes \cite{KNMD,WWCY}   both apply   hybrid coding schemes involving two subsystems with splitting parameters $\alpha,\beta$. We can also design a hybrid scheme where our schemes work in the first subsystem.  Since  the second subsystem  totally ignores the  cache contents, and  the optimal $\alpha$ and $\beta$ are unknown, we only focus on  the schemes working in the first subsystem, i.e., compare all schemes with  $\alpha=\beta=1$.

Fig. \ref{subfig1} plots the rate $R_1$ versus  ${M_1}/{N}$. We can see that applying the proposed   schemes for Theorem \ref{th-maint-2} and \ref{th-maint-4}  can significantly reduce the transmission load $R_1$  compared to the KNMD scheme. In order to have a clear view, we draw a sketch sub-figure in Fig. \ref{subfig1} that is without the KNMD scheme. Among all schemes, the scheme for Theorem \ref{th-maint-2} achieves the smallest $R_1$, and  the WWCY scheme achieves the second best performance. Note that the WWCY scheme is a special case of Theorem \ref{th-maint-4} where both $\mathbf{A}$ and $\mathbf{B}$ are MN PDAs, as mentioned in Remark \ref{remark-3}. By comparing the WWCY scheme, Scheme I and Scheme II for Theorem \ref{th-maint-4}, we can see that using different PDAs for $\mathbf{A}$ and $\mathbf{B}$ results in different transmission loads,  and using MN PDA for   $\mathbf{A}$ or $\mathbf{B}$ would reduce the transmission load than using other PDAs. %When the outer and inner array are both MN PDAs, i.e., the Scheme which is equivalent to the WWCY scheme, it achieves a transmission load which is close to $R^*_1$.

Fig. \ref{subfig2} compares the rate $R_2$ versus ${M_1}/{N}$. It can be seen that the scheme for Theorem \ref{th-maint-2} requires the largest  $R_2$. In view of  Fig. \ref{subfig1} where  the scheme for Theorem \ref{th-maint-2} achieves the optimal $R^*_1$, we obtain that there exists a tradeoff between $R_1$ and $R_2$. In other words, minimizing  $R_1$ may lead to the increasing on $R_2$.  Note that the WWCY scheme, the KNMD scheme and the Scheme I for Theorem \ref{th-maint-4} achieve the same $R_2$, which is the   optimal $R^*_2$ under uncoded placement,  and the gap between the optimal rate $R^*_2$ and the rate $R_2$ of the Scheme II for Theorem \ref{th-maint-4} is almost marginal, especially when $M_1/N$ is small.  Note that the curves in Fig.  \ref{subfig2}   show  that $R_2$ increases with   ${M_1}/{N}$. This is because $R_2$ in general decreases with ${M_2}/{N}$, while ${M_2}/{N}$ increases due to   the relation $M_1/N+M_2/N=t/K$ indicated by  \eqref{eqRatioThm2}. %, however as mentioned above, $\frac{M_2}{N}$ decreases from left to right in Fig. \ref{subfig2}, which results in a upward curve.

Fig. \ref{subfig3} demonstrates the subpacketization levels of various schemes. It can be seen that the Scheme for Theorem  \ref{th-maint-2}, which achieves the minimum $R_1$, requires the highest subpacketization level, and the subpacketization level decreasing almost linearly with $M_1/N$. The Scheme II for Theorem \ref{th-maint-4}, which uses the proposed PDAs in \cite{YCTC} for both $\mathbf{A}$ and $\mathbf{B}$, requires the lowest subpacketization level. The WWCY scheme, which uses MN-type PDAs for  both $\mathbf{A}$ and $\mathbf{B}$, incurs larger subpacketization level than other schemes not using MN-type PDAs. From the above, we can conclude that by choosing different types of PDAs to construct the HPDA, one can achieve a flexible tradeoff between  the subpacketization level and transmission loads.

\section{Proof of Theorem \ref{th-maint-2}}
\label{sec:5}
 {  In this section, we   describe how to    construct a $(K_1,K_2$; $F$; $Z_1$, $Z_2$; $\mathcal{S}_\text{m},\mathcal{S}_1$, $\ldots$, $\mathcal{S}_{K_1})$ HPDA $\mathbf{P}= \left(\mathbf{P}^{(0)}\right.$, $\mathbf{P}^{(1)}$, $\ldots$, $\left.\mathbf{P}^{(K_1)}\right)$ in Theorem \ref{th-maint-2}  based on the array $(K,F,Z,S)$ MN PDA $\mathbf{Q}$. We partition $\mathbf{Q}$ into $K_1$ parts by column, i.e., $\mathbf{Q}=\left(\mathbf{Q}^{(1)},\ldots,\mathbf{Q}^{(K_1)}\right)$. Then we have the following expressions.
 \begin{IEEEeqnarray}{rCl}
&& \mathbf{P}^{(0)}=(p^{(0)}_{f,k_1})_{f\in[F],k_1\in [K_1]},~p^{(0)}_{f,k_1}\in\{*,null\}\\
&& \mathbf{P}^{(k_1)}=(p^{(k_1)}_{f,k_2})_{f\in[F],k_2\in [K_2]}, ~p^{(k_1)}_{f,k_2}\in\{*\}\cup \mathcal{S}_{k_1},~k_1\in [K_1]~\\
&&\mathbf{Q}^{(k_1)}=(q^{(k_1)}_{f,k_2})_{f\in[F],k_2\in [K_2]}, ~q^{(k_1)}_{f,k_2}\in[S], ~k_1\in [K_1].
\end{IEEEeqnarray}
As mentioned in Remark \ref{re-relationship-HPDA-HCCS},  $\mathbf{P}^{(0)}$ and $\mathbf{P}^{(k_1)}$  indicates the data placement   at mirror sites and users in $\mathcal{U}_{k_1}$, respectively.
Given any array,  we call a row of it \emph{star row} if this row   contains only star entries.   The construction of HPDA $\mathbf{P}= \left(\mathbf{P}^{(0)}\right.$, $\mathbf{P}^{(1)}$, $\ldots$, $\left.\mathbf{P}^{(K_1)}\right)$ for Theorem \ref{th-maint-2} can be briefly described as follows: To construct array $\mathbf{P}^{(0)}$, we let  its element  $p^{(0)}_{f,k_1}$    be a star entry   if the $f$-th row of $\mathbf{Q}^{(k_1)}$ is a star row, and be   null otherwise. To construct the array   $\mathbf{P}^{(k_1)}$, we simply replace all the star entries in each star row of   $\mathbf{Q}^{(k_1)}$ with distinct integers,  for all $k_1\in[K_1]$.

In the follows,  we first use an illustrative example to show the construction of $\mathbf{P}$, and then present our general proof of coded caching scheme based on the HPDA in Theorem \ref{th-maint-2}.
\subsection{Example of the Construction of HPDA in Theorem \ref{th-maint-2}}
\label{sub-sketch-group}

\begin{figure}[http!]
\centering
\includegraphics[scale=0.9]{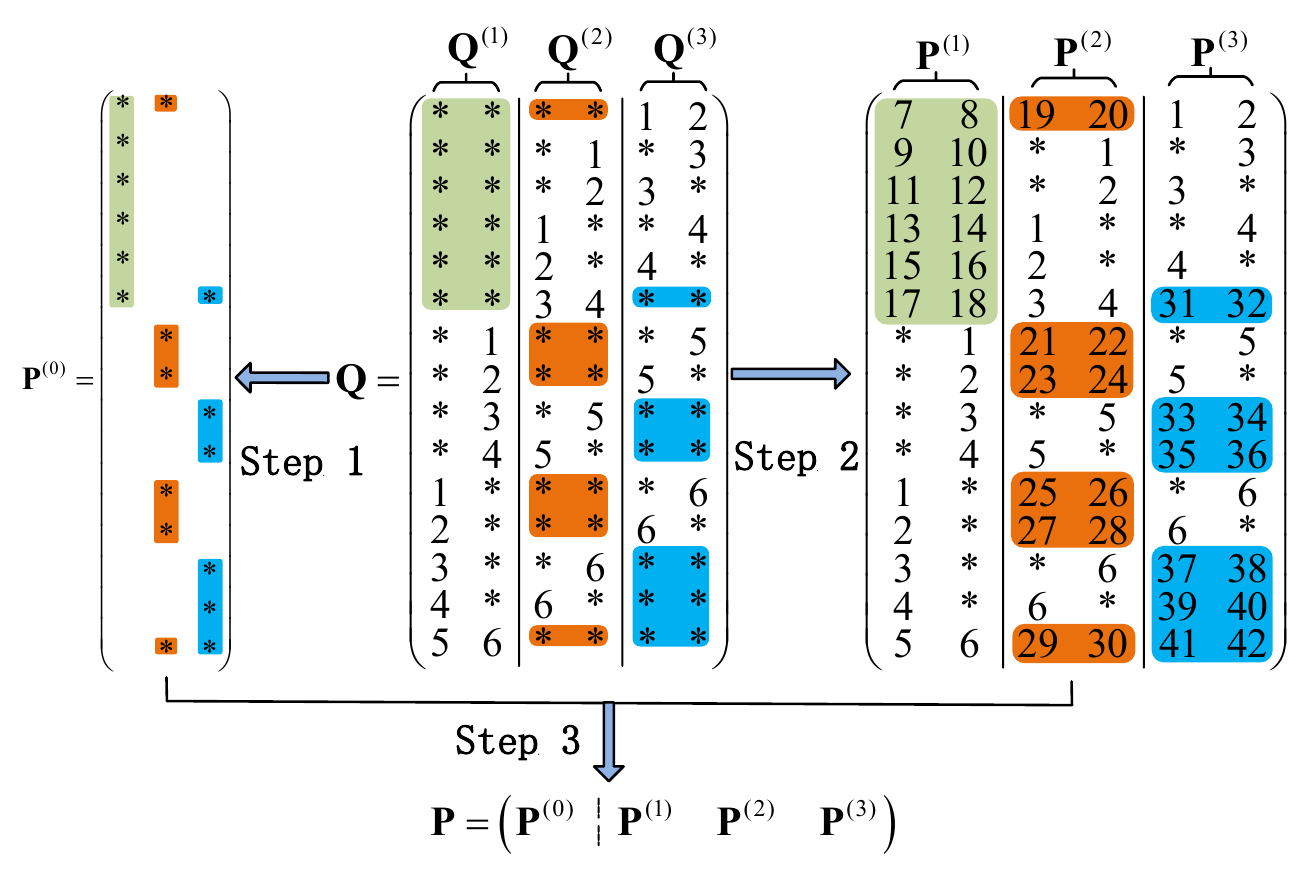}
\caption{ The transformation from MN PDA $\mathbf{Q}$ to a HPDA $\mathbf{P}$ in Theorem \ref{th-maint-2}.}
\label{fig-sketch2}
\end{figure}

\label{ex-sketch}
Given a $(K_1K_2,$$F,Z,S)$$=$$(3\times2,$$15,10,6)$ MN PDA $\mathbf{Q}=\left(\mathbf{Q}^{(1)},\ldots,\mathbf{Q}^{(K_1)}\right)$, we will use a grouping method to construct a $(3,2;15;6,4;$ $\mathcal{S}_\text{m},$$ \mathcal{S}_1,$$ \mathcal{S}_2,$$ \mathcal{S}_3)$ HPDA $\mathbf{P}$ in \eqref{eg-HPDA}, where
\begin{eqnarray}
\label{eq-alphabet-set2}
&&\mathcal{S}_\text{m}=[7:42],\
\mathcal{S}_\text{1}=[1:18],\
\mathcal{S}_\text{2}=[1:6]\cup [19:30],~\\
&&\mathcal{S}_\text{3}=[1:6]\cup [31:42]\nonumber
\end{eqnarray}
 The construction includes the following three steps, as illustrated in Fig. \ref{fig-sketch2}.
\begin{itemize}
\item\textbf{Step 1.} Construction of  $\mathbf{P}^{(0)}$$=(p^{(0)}_{f,k_1})_{f\in[15], k_1\in[3]}$. Because the star rows of $\mathbf{Q}^{(1)}$ are row $1$, $2$, $3$, $4$, $5$, and  $6$, we fill $p^{(0)}_{1,1}=$ $p^{(0)}_{2,1}=$ $p^{(0)}_{3,1}=$ $p^{(0)}_{4,1}=$ $p^{(0)}_{5,1}=$ $p^{(0)}_{6,1}=*$, and for the rest entries in column $1$ of $\mathbf{P}^{(0)}$, we fill them with null. Similarly, the columns $2$ and $3$ of $\mathbf{P}^{(0)}$ can be obtained by applying the same operation on $\mathbf{Q}^{(2)}$ and $\mathbf{Q}^{(3)}$ respectively. Then the resulting array is our required  $\mathbf{P}^{(0)}$. %Clearly each column of $\mathbf{P}^{(0)}$ has $6$ stars, i.e., the Condition B$1$ of Definition \ref{def-H-PDA} holds.
\item\textbf{Step 2.} Construction of  $\left(\mathbf{P}^{(1)},\mathbf{P}^{(2)},\mathbf{P}^{(3)}\right)=$ $(p^{(k_1)}_{f,k_2})$ where $f\in[15]$, $k_2\in[2]$, $k_1\in[3]$. Taking $\mathbf{P}^{(1)}$ as an example, we fill the entries in star rows of $\mathbf{Q}^{(1)}$ with distinct integers to get $\mathbf{P}^{(1)}$, i.e., $q^{(1)}_{1,1}=7$, $q^{(1)}_{1,2}=8$, $q^{(1)}_{2,1}=9$, $q^{(1)}_{2,2}=10$, $q^{(1)}_{3,1}=11$, $q^{(1)}_{3,2}=12$, $q^{(1)}_{4,1}=13$, $q^{(1)}_{4,2}=14$, $q^{(1)}_{5,1}=15$, $q^{(1)}_{5,2}=16$, $q^{(1)}_{6,1}=17$, $q^{(1)}_{6,2}=18$. Similarly we can obtain $\mathbf{P}^{(2)}$ and $\mathbf{P}^{(3)}$ in Fig. \ref{fig-sketch2}. The integer sets of $\mathbf{P}^{(1)}$, $\mathbf{P}^{(2)}$ and $\mathbf{P}^{(3)}$ in \eqref{eq-alphabet-set2} can be obtained directly from Fig. \ref{fig-sketch2}.

\item\textbf{Step 3.} Construction of $\mathbf{P}$. We get a $15\times 9$ array by arranging $\mathbf{P}^{(0)}$ and $\mathbf{P}^{(1)}$, $\mathbf{P}^{(2)}$ and $\mathbf{P}^{(3)}$ horizontally, i.e. $\mathbf{P}=\left(\mathbf{P}^{(0)}, \mathbf{P}^{(1)}, \mathbf{P}^{(2)},\mathbf{P}^{(3)}\right)$.
\end{itemize}

Now we   verify that the construction above leads to an HPDA defined in Definition \ref{def-H-PDA}.
\begin{itemize}
\item Each column of $\mathbf{P}^{(0)}$ has $Z_1=6$ stars,  satisfying Condition B$1$.
\item $\mathbf{P}^{(1)}$, $\mathbf{P}^{(2)}$ and $\mathbf{P}^{(3)}$ are $(2,15,4,18)$ PDAs, satisfying    Condition B$2$ of Definition \ref{def-H-PDA}.
\item From Fig. \ref{fig-sketch2}, we have $\mathcal{S}_\text{m}=[7:42]$, whose integers only appear in one $\mathbf{P}^{(k_1)},k_1\in[3]$, and we can check that, if $p^{(k_1)}_{f,k_2}=s\in\mathcal{S}_\text{m}$, then $p^{(0)}_{f,k_1}=*$, thus   Condition B$3$ of Definition \ref{def-H-PDA} holds.
\item It can be checked that Condition B$4$ also  holds. Take $p^{(k_1)}_{f,k_2}=p^{(k'_1)}_{f',k'_2}=1$ as an example.  From \eqref{eg-HPDA}, we   can see that   $p^{(1)}_{11,1}=p^{(1)}_{7,2}=p^{(2)}_{4,1}=p^{(2)}_{2,2}=p^{(3)}_{1,1}=1$. When choosing $f=11,k_1=k_2=1$ and $f'\in\{4,2,1\}$, the corresponding $p^{(0)}_{f',k_1}$ equals to $*$, i.e., $p^{(0)}_{4,1}=p^{(0)}_{2,1}=p^{(0)}_{1,1}=*$, satisfying Condition B$4$ of Definition \ref{def-H-PDA}.
\end{itemize}

%\begin{figure}[http!]
%\centering
%\includegraphics[scale=1]{subarray}
%\caption{ The sub-array $\mathbf{P}_{(1)}$ containing the integer $1$ in $\mathbf{P}$.}
%\label{fig-subarray}
%\end{figure}

\subsection{ General Proof of Theorem \ref{th-maint-2}}
\label{sub-proof-th2}
Given  a $(K,F,Z,S)=$$(K_1K_2,{K_1K_2\choose t},{K_1K_2-1\choose t-1},{K_1K_2\choose t+1})$ MN PDA $\mathbf{Q}=\left(\mathbf{Q}^{(1)}\right.$ $\left.,\ldots,\mathbf{Q}^{(K_1)}\right)$, $t\in[K_2,K_1K_2]$, we show how to construct a $(K_1,K_2$; $F$; $Z_1$, $Z_2$; $\mathcal{S}_\text{m},\mathcal{S}_1$, $\ldots$, $\mathcal{S}_{K_1})=$ $(K_1,K_2$; ${K_1K_2\choose t}$; ${K_1K_2-K_2\choose t-K_2}$, ${K_1K_2-1\choose t-1}-{K_1K_2-K_2\choose t-K_2}$; $\mathcal{S}_\text{m},\mathcal{S}_1$, $\ldots$, $\mathcal{S}_{K_1})$ HPDA  $\mathbf{P}=\left(\mathbf{P}^{(0)}\right.$, $\mathbf{P}^{(1)}$, $\ldots$, $\left.\mathbf{P}^{(K_1)}\right)$, where $\mathcal{S}_{\text{m}}$, $\mathcal{S}_{k_1}$, $k_1\in[K_1]$, are listed in \eqref{eq-alphabet-m} and \eqref{eq-S_k_1} respectively. For the sake of convenience, we use a set $\mathcal{T}\in {[K_1K_2]\choose t}$ to represent the row index of an MN PDA defined in Construction \ref{con-MN} and the constructed HPDA, i.e., $\mathbf{Q}^{(k_1)}$ is represented by $\mathbf{Q}^{(k_1)}=(q^{(k_1)}_{\mathcal{T},k_2})$ $_{\mathcal{T}\in {[K_1K_2]\choose t}, k_2\in[(k_1-1)K_2+1:k_1K_2]}$, $k_1\in[K_1]$.
 The  constructions of $\mathbf{P}^{(0)}$ and  $(\mathbf{P}^{(1)}, \ldots, \mathbf{P}^{(K_1)} )$ are described  as follows:

 %where $Z_1={K-K_2\choose t-K_2}$, $Z_2=Z-Z_1$ and $\mathcal{S}_{\text{m}}$, $\mathcal{S}_{k_1}$, $k_1\in[K_1]$ are listed in \eqref{eq-alphabet-m} and \eqref{eq-S_k_1} respectively.

%we will give the general proof of Theorem \ref{th-maint-2} by constructing an HPDA $\mathbf{P}=\left(\mathbf{P}^{(0)},\mathbf{P}^{'}\right)=\left(\mathbf{P}^{(0)}\right.$, $\mathbf{P}^{(1)}$, $\ldots$, $\left.\mathbf{P}^{(K_1)}\right)$ based on a $(K,F,Z,S)=\big(K_1K_2$, ${K_1K_2\choose t}$, ${K_1K_2-1\choose t-1}$, ${K_1K_2\choose t+1} \big)$ MN PDA $\mathbf{Q}$ with $K_2<t<K$, dividing $\mathbf{Q}$ into $K_1$ parts as $\mathbf{Q}=\left(\mathbf{Q}^{(1)},\ldots,\mathbf{Q}^{(K_1)}\right)$, and each of which has $K_2$ columns. We can obtain a $(K_1,K_2; F; Z_1,Z_2; \mathcal{S}_{\text{m}},\mathcal{S}_{1},\ldots,\mathcal{S}_{K_1})$ $\mathbf{P}$ where $Z_1={K-K_2\choose t-K_2}$, $Z_2=Z-Z_1$ and $\mathcal{S}_{\text{m}}$, $\mathcal{S}_{k_1}$, $k_1\in[K_1]$ are listed in \eqref{eq-alphabet-m} and \eqref{eq-S_k_1} respectively.
\begin{itemize}
\item{\bf Step 1.} Construction of $\mathbf{P}^{(0)}$. We construct the $F\times K_1$ mirror site's placement array $\mathbf{P}^{(0)}=(p^{(0)}_{\mathcal{T},k_1})_{\mathcal{T}\in {[K_1K_2]\choose t},k_1\in[K_1]}$ by the following rule:
\begin{eqnarray}
\label{eq-mirror-cache1}
p^{(0)}_{\mathcal{T},k_1}=\left\{\begin{array}{cl}
                  *, & ~\text{if~} q^{(k_1)}_{\mathcal{T},k_2}=*, \begin{array}{c} \forall k_2\in[(k_1-1)K_2+1:k_1K_2]
                                  \end{array}
                                  \\
                  \text{null}, &  \text{otherwise}.
                \end{array}\right.
\end{eqnarray}That is,  let $p^{(0)}_{\mathcal{T},k_1}$ be a star if the     row $\mathcal{T}$ of $\mathbf{Q}^{(k_1)}$  is a star row,  and be null otherwise.

\item {\bf Step 2.} Construction of $(\mathbf{P}^{(1)},\ldots,\mathbf{P}^{(K_1)})$. For each $k_1\in[K_1]$,   $\mathbf{Q}^{(k_1)}$ is used to construct $\mathbf{P}^{(k_1)}=$ $(p^{(k_1)}_{\mathcal{T},k_2})$, where $ \mathcal{T}\in {[K_1K_2]\choose t}, k_2\in[(k_1-1)K_2+1:k_1K_2]$. Note that there are in total $K_1Z_1=K_1{K_1K_2-K_2\choose t-K_2}$ star rows in $\mathbf{Q}^{(1)}$, $\ldots$, $\mathbf{Q}^{(K_1)}$ and each star row has $K_2$ star entries. Then we replace all these star entries in each star row of $\mathbf{Q}^{(1)}$, $\ldots$, $\mathbf{Q}^{(K_1)}$ with consecutive integers from $S+1$ to $S+K_2K_1{K_1K_2-K_2\choose t-K_2}$ to construct $(\mathbf{P}^{(1)},\ldots,\mathbf{P}^{(K_1)})$, and all these integers form the set $\mathcal{S}_{\text{m}}$ as follows.
\begin{eqnarray}
\label{eq-alphabet-m}
\mathcal{S}_{\text{m}}=\left[S+1:\ S+K_2K_1{K-K_2\choose t-K_2}\right]
\end{eqnarray}
\item{\bf Step 3.} Construction of $\mathbf{P}$. We get an $F_1F_2\times (K_1+K_1K_2)$ array by arranging $\mathbf{P}^{(0)}$ and $(\mathbf{P}^{(1)}\ldots, \mathbf{P}^{(K_1)} )$ horizontally, i.e., $\mathbf{P}=\left(\mathbf{P}^{(0)}, \mathbf{P}^{(1)}, \ldots, \mathbf{P}^{(K_1)}\right)$.
\end{itemize}
\subsubsection{Parameter computations} The integer set $\mathcal{S}_{\text{m}}$ can be directly obtained from \eqref{eq-alphabet-m}, which has no intersection with $[S]$.

Then we focus on $\mathcal{S}_{k_1}$. Clearly each $\mathbf{Q}^{(k_1)}$ satisfies Conditions C1 and C3 of Definition \ref{def-PDA}. Now we consider the integer set of $\mathbf{Q}^{(k_1)}$. Recall that $\phi_t(\cdot)$ is a bijection from  $\binom{[K_1K_2]}{t}$ to $[\binom{K_1K_2}{t}]$ in \eqref{Eqn_Def_AN}. Then for any sub-array $\mathbf{Q}^{(k_1)}$, integer $s\in [S]$ is in $\mathbf{Q}^{(k_1)}$ if and only if its inverse mapping $\mathcal{S}=\phi^{-1}_{t+1}(s)$ contains at least one integer of $[(k_1-1)K_2+1:k_1K_2]$, i.e., $\mathcal{S}\cap[(k_1-1)K_2+1:k_1K_2]\neq \emptyset$. So the integer set of $\mathbf{Q}^{(k_1)}$ is
\begin{eqnarray}
\label{eq-S'_k_1}
\begin{split}
&\mathcal{S}'_{k_1}=\bigg\{\phi_{t+1}(\mathcal{S})\ |\mathcal{S}\cap[(k_1-1)K_2+1:k_1K_2]\neq\emptyset,
\mathcal{S}\in {[K_1K_2]\choose t+1}\bigg\}.
\end{split}
\end{eqnarray}
Furthermore, after adding up the integers used for the substitution in Step 2 of $\mathbf{Q}^{(k_1)}$, the integer set of $\mathbf{P}^{(k_1)}$ is
\begin{eqnarray}
\label{eq-S_k_1}
\begin{split}
\mathcal{S}_{k_1}&=\left[S+(k_1-1)K_2{K_1K_2-K_2\choose t-K_2}+1:S+
k_1K_2{K_1K_2-K_2\choose t-K_2}\right] \bigcup \mathcal{S}'_{k_1},&
\end{split}
\end{eqnarray}and $|\mathcal{S}_{k_1}| =$ $ K_2{K_1K_2-K_2\choose t-K_2}+S-{K_1K_2-K_2\choose t+1}$.

\subsubsection{The properties of HPDA verification}
From \eqref{Eqn_Def_AN} the row $\mathcal{T}$ of $\mathbf{Q}^{(k_1)}$ is a star row if and only if all the integers of $[(k_1-1)K_2+1:k_1K_2]$ (i.e.,   indices of users in $\mathcal{U}_{k_1}$) are contained by $\mathcal{T}$. So there are ${K_1K_2-K_2\choose t-K_2}$ star rows, i.e., each column of $\mathbf{P}^{(0)}$ has $Z_1={K_1K_2-K_2\choose t-K_2}$ stars, satisfying  Condition B$1$ of Definition \ref{def-H-PDA}.

From \eqref{eq-S'_k_1} we have
$|\mathcal{S}'_{k_1}|=S-{K_1K_2-K_2\choose t+1}$. So $\mathbf{Q}^{(k_1)}$ is a $(K_2,F,Z,S-{K_1K_2-K_2\choose t+1})$ PDA. Actually, $\mathbf{P}^{(k_1)}$ is obtained by replacing the star entries in star row of $(K_2,F,Z,S-{K-K_2\choose t+1})$ PDA $\mathbf{Q}^{(k_1)}$ by some unique integers which have no intersection with $[S]$. So $\mathbf{P}^{(k_1)}$ is a $(K_2,F,Z_2,S-{K_1K_2-K_2\choose t+1}+K_2{K_1K_2-K_2\choose t-K_2})$ PDA, satisfying Condition B$2$ of Definition \ref{def-H-PDA}.

From Step $2$, we know that each integer $s$ in $\mathcal{S}_{\text{m}}$ occurs only once in $(\mathbf{P}^{(1)},\ldots,\mathbf{P}^{(K_1)})$. Clearly the first part of Condition B$3$ holds. For any $p^{(k_1)}_{\mathcal{T},k_2}=s\in\mathcal{S}_{\text{m}}$, we know row $\mathcal{T}$ is a star row of $\mathbf{Q}^{(k_1)}$, then from \eqref{eq-mirror-cache1} we have $p^{(0)}_{\mathcal{T},k_1}=*$. The second part of  Condition B$3$ holds, then Condition B$3$ holds.

We can show that Condition B$4$ holds by the following reason. Assume that there are two entries $p^{(k_1)}_{\mathcal{T},k_2}=p^{(k^{'}_1)}_{\mathcal{T}^{'},k^{'}_2}=s$, where $k_1\neq k^{'}_1$. Then $s\not\in \mathcal{S}_{\text{m}}$ since each integer of $\mathcal{S}_{\text{m}}$ occurs exactly once. Furthermore if $p^{(k_1)}_{\mathcal{T'},k_2}=s'$ is an integer, then $s'$ must be the element of $\mathcal{S}_{\text{m}}$, otherwise it  contradicts our hypothesis that $\mathbf{Q}$ is a PDA. From the construction of $\mathbf{P}^{(k_1)}$, $s'\in \mathcal{S}_{\text{m}}$ only if the row indexed by $\mathcal{T}$ of $\mathbf{Q}^{(k_1)}$ is a star row. Then from \eqref{eq-mirror-cache1} we have $p^{(0)}_{\mathcal{T},k_1}=*$.

From the above introduction, our expected HPDA is obtained. And the integer set $\bigcup_{k_1=1}^{K_1}\mathcal{S}_{k_1}\setminus\mathcal{S}_{\text{M}}$ is actually the integer set $[S]$, whose cardinality is ${K_1K_2\choose t+1}$. Then by Theorem \ref{th-main-result}, the transmission loads
%\begin{eqnarray*}
%R_1&=&\frac{|\bigcup_{i=1}^{K_1}\mathcal{S}_i|-|\mathcal{S}_\text{m}|}{F}
%=\frac{S}{F}=\frac{K-t}{t+1}\\
%R_2&=&\max\left.\left\{\  \frac{\mid\mathcal{S}_{k_1}\mid}{F}\ \right| \ k_1\in[K_1] \right\}=\frac{K_2{K-K_2\choose t-K_2}+S-{K-K_2\choose t+1}}{F}\\
%&=&\frac{K-t}{t+1}-\frac{{K-K_2\choose t+1}}{{K\choose t}}+\frac{{K-K_2\choose t-K_2}K_2}{{K\choose t}}
%\end{eqnarray*} can be directly obtained.
\begin{eqnarray*}
R_1&=&\frac{|\bigcup_{k_1=1}^{K_1}\mathcal{S}_{k_1}|-|\mathcal{S}_\text{m}|}{F}
=\frac{S}{F}=\frac{K_1K_2-t}{t+1}\\
R_2&=&\max\left.\left\{\  \frac{\mid\mathcal{S}_{k_1}\mid}{F}\ \right| \ k_1\in[K_1] \right\}\\
&=&\frac{K_2{K_1K_2-K_2\choose t-K_2}+S-{K_1K_2-K_2\choose t+1}}{F}\\
&=&\frac{K_1K_2-t}{t+1}-\frac{{K_1K_2-K_2\choose t+1}}{{K_1K_2\choose t}}+\frac{{K_1K_2-K_2\choose t-K_2}K_2}{{K_1K_2\choose t}}
\end{eqnarray*} can be directly obtained.

\section{Proof of Theorem \ref{th-maint-4}}
\label{sec:6}
 In this section, we describe how to construct the $(K_1,K_2$; $F_1F_2$; $Z_1F_2$, $Z_2F_1$; $\mathcal{S}_\text{m}$, $\mathcal{S}_1$, $\ldots$, $\mathcal{S}_{K_1})$ HPDA $\mathbf{P}= \left(\mathbf{P}^{(0)}\right.$, $\mathbf{P}^{(1)}$, $\ldots$, $\left.\mathbf{P}^{(K_1)}\right)$ in Theorem \ref{th-maint-4} based on any $(K_1,F_1,Z_1,S_1)$ PDA $\mathbf{A}$ and $(K_2,F_2,Z_2,S_2)$ PDA $\mathbf{B}$, where
\begin{IEEEeqnarray}{rCl}
&& \mathbf{P}^{(0)}=(p^{(0)}_{(f_1,f_2),k_1}),\ \ \ \ p^{(0)}_{(f_1,f_2),k_1}\in\{*,null\}\nonumber,\\
&& \mathbf{P}^{(k_1)}=(p^{(k_1)}_{(f_1,f_2),k_2}),\ \ \  p^{(k_1)}_{(f_1,f_2),k_2}\in\{*\}\cup \mathcal{S}_{k_1}\nonumber,
\end{IEEEeqnarray} $(f_1,f_2)$ is a couple indicating the $(f_1-1)F_1+f_2$-th row and where $f_1\in[F_1]$, $f_2\in[F_2]$, $k_1\in [K_1]$, $k_2\in[K_2]$.
The construction of HPDA $\mathbf{P}= \left(\mathbf{P}^{(0)}\right.$, $\mathbf{P}^{(1)}$, $\ldots$, $\left.\mathbf{P}^{(K_1)}\right)$ for Theorem \ref{th-maint-4} can be briefly described as follows: Given two PDAs, denoted by $\mathbf{A}$, $\mathbf{B}$ respectively, $\mathbf{P}^{(0)}$ is obtained by deleting all the integers of $\mathbf{A}$ and then simply expanding each row of it, and
$\left(\mathbf{P}^{(1)}\right.$, $\ldots$, $\left.\mathbf{P}^{(K_1)}\right)$ is obtained by using a hybrid method, in which $\mathbf{A}$ acts as an outer array and the inner arrays are simply constructed from $\mathbf{B}$.

In the follows, we first give an illustrative example to show the  construction of $\mathbf{P}$ based on two MN PDAs, and then present our general proof of coded caching scheme based on the HPDA in Theorem \ref{th-maint-4}.
\subsection{Example of the Construction of HPDA in Theorem \ref{th-maint-4}}
\label{sub-sketch}
% Here we take a concrete example to show our construction idea, as illustrated in Fig. \ref{fig-sketch}.
Given a $(K_1,F_1,Z_1,S_1)=(2,2,1,1)$ MN PDA $\mathbf{A}=(a_{f_1,k_1})_{f_1\in[2], k_1\in[2]}$ and a $(K_2,F_2,Z_2,S_2)=(3,3,1,3)$ MN PDA $\mathbf{B}=(b_{f_2,k_2})_{f_2\in[3], k_2\in[3]}$ where
\begin{eqnarray}
\label{eq-three-array}
\mathbf{A}=\left(
\begin{array}{cc}
* & 1 \\
1 & *
\end{array}
\right) \ \ , \ \
\mathbf{B}=\left(
\begin{array}{ccc}
*	&	1	&	2\\
1	&	*	&	3\\
2	&	3	&	*
\end{array}
\right).
\end{eqnarray}
 We will use a hybrid method to construct a $(2,3;6;3,2;$ $\mathcal{S}_\text{m}$, $\mathcal{S}_\text{1},\mathcal{S}_\text{2})$ HPDA where
\begin{eqnarray}
\label{eq-alphabet-set}
\mathcal{S}_\text{m}=[4:9],\ \
\mathcal{S}_\text{1}=[1:6],\ \
\mathcal{S}_\text{2}=[1:3]\cup [7:9],
\end{eqnarray}
 through the following three steps, as illustrated in Fig. \ref{fig-sketch}.
\begin{figure}[http!]
\centering
\includegraphics[scale=0.9]{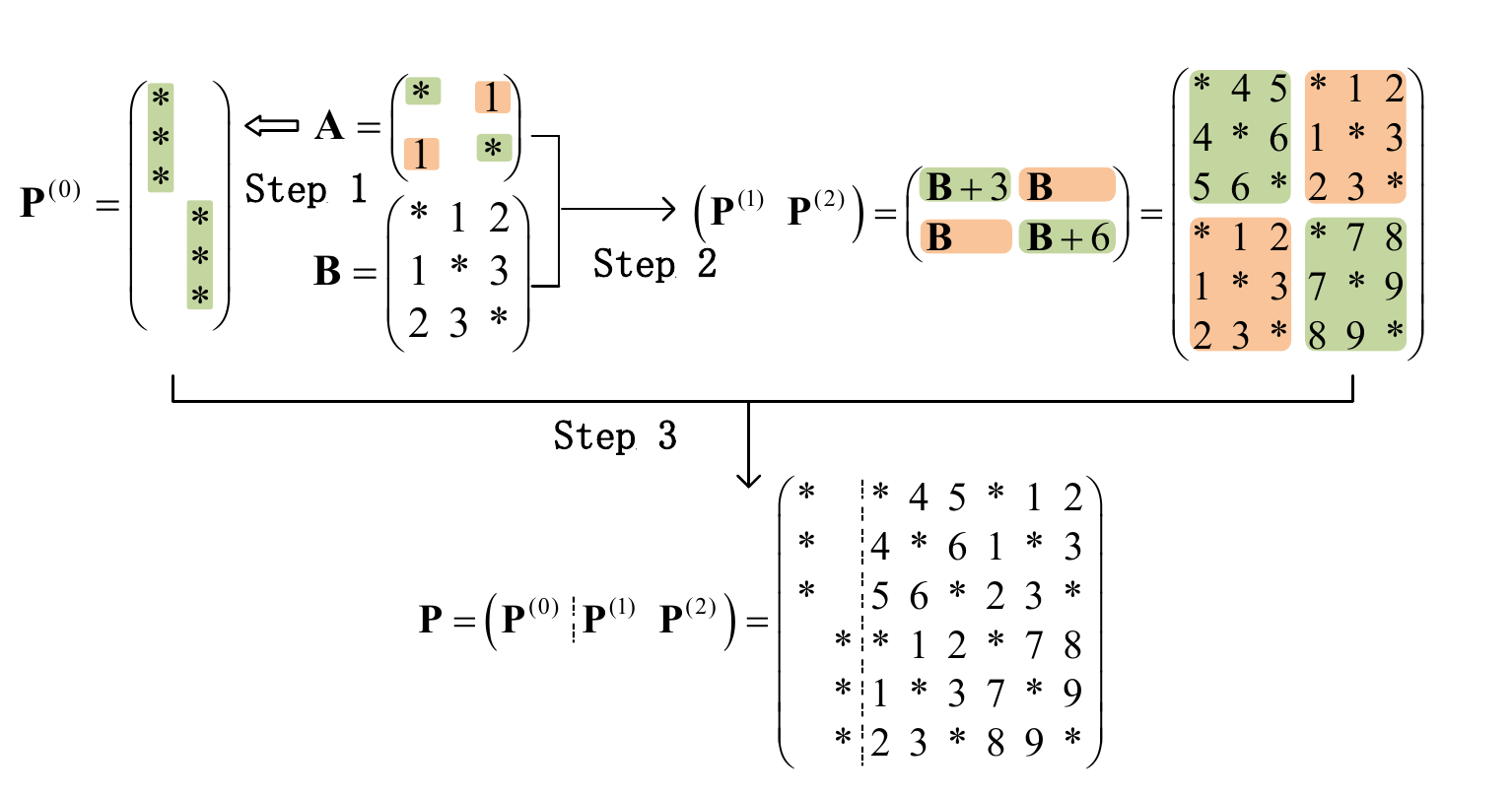}
\caption{ The transformation from MN PDAs $\mathbf{A}$ and $\mathbf{B}$ to a HPDA $\mathbf{P}$ in Theorem \ref{th-maint-4}.}
\label{fig-sketch}
\end{figure}
\begin{itemize}
\item\textbf{Step 1.} Construction of $\mathbf{P}^{(0)}$ for mirror sites. We can get a $6\times 2$ array $\mathbf{P}^{(0)}$ by deleting all the integer entries of $\mathbf{A}=(a_{f_1,k_1})_{f_1\in[2], k_1\in[2]}$ and then expanding each row $3$ times\footnote{\label{foot;ICM(20,2)}
		This is the row number of the inner structure array which will be introduced in Step 2.}.
\item\textbf{Step 2.} Construction of $\left(\mathbf{P}^{(1)},\mathbf{P}^{(2)}\right)$ for users. We replace the integer entries $a_{2,1}$ and $a_{1,2}$ by $\mathbf{B}$, and replace $a_{1,1}=a_{2,2}=*$ by $\mathbf{B}+3$, $\mathbf{B}+6$ respectively to get
    \begin{eqnarray}
    \label{eq-subarrays}
    \begin{split}
\mathbf{P}^{(1)}=\left(
\begin{array}{c}
\mathbf{B}+3\\
\mathbf{B}
\end{array}
\right),\ \
\mathbf{P}^{(2)}=\left(
\begin{array}{c}
\mathbf{B}\\
\mathbf{B}+6
\end{array}
\right).
\end{split}
\end{eqnarray}
\item\textbf{Step 3.} Construction of $\mathbf{P}$. We get a $6\times 8$ array by arranging $\mathbf{P}^{(0)}$ and $(\mathbf{P}^{(1)},\mathbf{P}^{(2)})$ horizontally, i.e., $\mathbf{P}=\left(\mathbf{P}^{(0)},\mathbf{P}^{(1)},\mathbf{P}^{(2)}\right)$.
\end{itemize}
Now we verify that the construction above  leads to a HPDA defined in Definition \ref{def-H-PDA}.
\begin{itemize}
\item Each column of $\mathbf{P}^{(0)}$ has $Z_1=3$ stars,  satisfying Condition B$1$ of Definition \ref{def-H-PDA}.
\item $\mathbf{P}^{(1)}$ and $\mathbf{P}^{(2)}$ are $(3,6,2,6)$ PDAs, satisfying Condition B$2$ of Definition \ref{def-H-PDA}.
\item From Fig. \ref{fig-sketch}, we have $\mathcal{S}_\text{m}=[4,9]$, whose integers only appear in one $\mathbf{P}^{(k_1)},k_1\in[2]$, and we can check that, if $p^{(k_1)}_{(f_1,f_2),k_2}=s\in\mathcal{S}_\text{m}$, then $p^{(0)}_{(f_1,f_2),k_1}=*$, thus   Condition B$3$ of Definition \ref{def-H-PDA} holds.
\item Finally we claim that the Condition B$4$ of Definition \ref{def-H-PDA} holds. Here we take $p^{(k_1)}_{(f_1,f_2),k_2}$$=p^{(k'_1)}_{(f'_1,f'_2),k'_2}$$=1$ as an example. From Fig. \ref{fig-sketch}, we   can see that $p^{(1)}_{(2,2),1}$ $=p^{(1)}_{(2,1),2}$ $=p^{(2)}_{(1,2),1}$ $=p^{(2)}_{(1,1),2}=1$. When choosing $(f_1,f_2)=(2,2)$, $k_1=k_2=1$ and $(f'_1,f'_2)\in\{(1,2),(1,1)\}$, the corresponding $p^{(0)}_{(f'_1,f'_2),k_1}$ equals to $*$, i.e., $p^{(0)}_{(1,2),1}=p^{(0)}_{(1,1),1}=*$, satisfying Condition B$4$ of Definition \ref{def-H-PDA}.

% since  and the corresponding $p^{(k_1')}_{j,k'_2}$ are integers, i.e.,    $p^{(1)}_{4,1}=13$, $p^{(1)}_{2,1}=9$, $p^{(1)}_{1,1}=7$. Other columns can be verified in the same manner, thus Condition B$4$ of Definition \ref{def-H-PDA} holds. Then $\mathbf{P}$ is an HPDA.

\end{itemize}

\subsection{ General Proof of Theorem \ref{th-maint-4}}
\label{proof:Th3}
In this subsection, we will prove Theorem \ref{th-maint-4} by constructing a $(K_1,K_2$; $F_1F_2$; $Z_1F_2$, $Z_2F_1$;  $\mathcal{S}_\text{m},\mathcal{S}_{1},\ldots,\mathcal{S}_{K_1})$ HPDA $\mathbf{P}= \left(\mathbf{P}^{(0)}\right.$, $\mathbf{P}^{(1)}$, $\ldots$, $\left.\mathbf{P}^{(K_1)}\right)$ with any $(K_1,F_1,Z_1,S_1)$ PDA $\mathbf{A}=(a_{f_1,k_1})_{f_1\in[F_1],\ k_1\in[K_1]}$ and $(K_2,F_2,Z_2,S_2)$ PDA $\mathbf{B}=(b_{f_2,k_2})_{f_2\in[F_2],\ k_2\in[K_2]}$, where $\mathcal{S}_\text{m}$ and $\mathcal{S}_{k_1}$ will be proposed in \eqref{eq-s_m} and \eqref{eq-S-k_1} respectively. The constructions of $\mathbf{P}^{(0)}$ and  $\left(\mathbf{P}^{(1)}\ldots, \mathbf{P}^{(K_1)} \right)$ are described  as follows:
\begin{itemize}
\item{\bf Step 1.} Construction of $\mathbf{P^{(0)}}$.
We can get an $F_1F_2\times K_1$ array $\mathbf{P}^{(0)}$ by deleting all the integers in $\mathbf{A}$ and expanding each row by $F_2$ times.  Then each entry $p^{(0)}_{(f_1,f_2),k_1}$ can be written as follows.
\begin{eqnarray}
\label{eq-mirror-cache}
p^{(0)}_{(f_1,f_2),k_1}=\left\{\begin{array}{cc}
                  * & \text{if } a_{f_1,k_1}=*,\\
                  \text{null} & \text{otherwise}.
                \end{array}\right.
\end{eqnarray} %Recall that $\phi_{t_1}(\cdot)$ is a bijection from $\binom{[K]}{t_1}$ to $[\binom{K}{t_1}]$.
\item{\bf Step 2.} Construction of $\left(\mathbf{P}^{(1)}\ldots, \mathbf{P}^{(K_1)} \right)$.
The main idea of constructing $\left(\mathbf{P}^{(1)}\ldots, \mathbf{P}^{(K_1)} \right)$ is replacing the entries of $\mathbf{A}$ by inner array $\mathbf{B}$ and adjusting its integers. As $\mathbf{A}$ consists of integer-type entries and star-type entries, our construction consists of the following two parts. Firstly we replace each integer entry $a_{f_1,k_1}=s$ by an $F_2\times K_2$ array
\begin{eqnarray}
\label{eq-type-I}
\mathbf{I}_1(s)=\mathbf{B}+(s-1)\times S_2.
\end{eqnarray} Secondly we replace each star entry $a_{f_1,k_1}=*$ by an $F_2\times K_2$ array
\begin{eqnarray}
\label{eq-type-II}
\mathbf{I}_2(k_1,f_1)=\mathbf{B}+[(k_1-1)Z_1+\varphi_{k_1}(f_1)-1+S_1]\times S_2,
\end{eqnarray} where $\varphi_{k_1}(f_1)$ represents the order of the row labels from up to down among all the star entries in $k_1$-th column of $\mathbf{A}$.

 Here we take  $\left(\mathbf{P}^{(1)}, \mathbf{P}^{(2)} \right)$ in Fig. \ref{fig-sketch} as an example. As $a_{1,2}=a_{2,1}=s=1$, we have $\mathbf{I}_1(1)=\mathbf{B}+(1-1)\times3=\mathbf{B}$ in \eqref{eq-three-array}. While $a_{1,1}=a_{2,2}=*$, $\varphi_{1}(1)=\varphi_{2}(2)=1$, we have $\mathbf{I}_2(1,1)=\mathbf{B}+[(1-1)\times1+1-1+1]\times3=\mathbf{B}+3$ and $\mathbf{I}_2(2,2)=\mathbf{B}+[(2-1)\times1+1-1+1]\times3=\mathbf{B}+6$ in \eqref{eq-two-array}, where
\begin{eqnarray}
\label{eq-two-array}
\mathbf{B}+3=\left(
\begin{array}{ccc}
*	&	4	&	5\\
4	&	*	&	6\\
5	&	6	&	*
\end{array}
\right),
\mathbf{B}+6=\left(
\begin{array}{ccc}
*	&	7	&	8\\
7	&	*	&	9\\
8	&	9	&	*
\end{array}
\right).
\end{eqnarray}
From the above example, obviously the integer sets in $\mathbf{I}_1(1)=\mathbf{B}$, $\mathbf{I}_2(1,1)=\mathbf{B}+3$ and $\mathbf{I}_2(2,2)=\mathbf{B}+6$ have no common integer, and we can generalize the investigation to the general cases, as illustrated in Lemma \ref{lem-integer-se-order}.

%\textcolor{blue}{From \eqref{eq-type-I} and \eqref{eq-type-II} we have $\mathbf{I}_1(s)$ and $\mathbf{I}_2(k_1,f_1)$ are both PDAs, and each $\mathbf{P}^{(k_1)}, k_1\in[K_1]$ are actually the vertical arrangement of these PDAs. Obviously when an array are obtained by arranging some existing PDAs vertically whose integer sets has no interactions, then the array is a PDA. So we would like to illustrate the intersection relationship of the integer sets between the replacing arrays in \eqref{eq-type-I} and \eqref{eq-type-II}. The following result whose proof is included in Appendix \ref{appen-set-order} can be obtained}
\begin{lemma}
\label{lem-integer-se-order}
For any integers $k_1$, $k_1'\in[K_1]$, $f_1$, $f_1'\in[F_1]$, $s$ and $s'$, we have the following statements on the integer sets in $\mathbf{I}_1(s)$, $\mathbf{I}_1(s')$, $\mathbf{I}_2(k_1,f_1)$ and $\mathbf{I}_2(k_1',f_1')$.
\begin{itemize}
\item %The maximum integer of replacing array $\mathbf{I}_1(s)$ is smaller than the minimum integer of the replacing array $\mathbf{I}_1(s')$ defined in \eqref{eq-type-I} only if $s<s'$;
    The integer sets in $\mathbf{I}_1(s)$ and $\mathbf{I}_1(s')$ have no common integer if and only if $s\neq s'$;
\item %When $k_1=k'_1$, the maximum integer of replacing array $\mathbf{I}_2(k_1,f_1)$ is smaller than the minimum integer of the replacing array $\mathbf{I}_2(k'_1,f'_1)$ defined in \eqref{eq-type-II} only if $\varphi_{k_1}(f_1)<\varphi_{k'_1}(f'_1)$;
    When $k_1=k'_1$, the integer sets in $\mathbf{I}_2(k_1,f_1)$ and $\mathbf{I}_2(k'_1,f'_1)$ have no common integer if and only if $\varphi_{k_1}(f_1)\neq\varphi_{k'_1}(f'_1)$;
\item %When $k_1\neq k'_1$, the maximum integer of replacing array $\mathbf{I}_2(k_1,f_1)$ is smaller than the minimum integer of the replacing array $\mathbf{I}_2(k'_1,f'_1)$ defined in \eqref{eq-type-II} only if $k_1<k'_1$;
    When $k_1\neq k'_1$, the integer sets in $\mathbf{I}_2(k_1,f_1)$ and $\mathbf{I}_2(k'_1,f'_1)$ have no common integer;
\item %When $s\in[S_1]$, the maximum integer of replacing array $\mathbf{I}_1(s)$  defined in \eqref{eq-type-I} is smaller than the minimum integer of the replacing array $\mathbf{I}_2(k_1,f_1)$ defined in \eqref{eq-type-II};
    When $s\in[S_1]$, the integer sets in $\mathbf{I}_1(s)$ and $\mathbf{I}_2(k_1,f_1)$ have no common integer;
\end{itemize}
\end{lemma}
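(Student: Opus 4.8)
\textbf{Proof proposal for Lemma \ref{lem-integer-se-order}.}
The plan is to reduce every claim to an elementary computation about the images of the affine shifts used in \eqref{eq-type-I} and \eqref{eq-type-II}. Recall that $\mathbf{B}$ is a $(K_2,F_2,Z_2,S_2)$ PDA, so by Condition C2 of Definition \ref{def-PDA} its integer alphabet is exactly $[S_2]$; hence $\mathbf{I}_1(s)=\mathbf{B}+(s-1)S_2$ has integer alphabet $[(s-1)S_2+1:sS_2]$, and $\mathbf{I}_2(k_1,f_1)=\mathbf{B}+[(k_1-1)Z_1+\varphi_{k_1}(f_1)-1+S_1]S_2$ has integer alphabet $[mS_2+1:(m+1)S_2]$ where $m=(k_1-1)Z_1+\varphi_{k_1}(f_1)-1+S_1$. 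So each inner array occupies a consecutive block of $S_2$ integers, and two such blocks are disjoint if and only if their ``block indices'' differ. I would first state this block-index normalization explicitly, then dispatch the four bullets.

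For the first bullet, the block index of $\mathbf{I}_1(s)$ is $s-1$, so the blocks of $\mathbf{I}_1(s)$ and $\mathbf{I}_1(s')$ coincide iff $s=s'$ and are otherwise disjoint. For the second bullet, with $k_1=k_1'$ the block indices are $(k_1-1)Z_1+\varphi_{k_1}(f_1)-1+S_1$ and $(k_1-1)Z_1+\varphi_{k_1}(f_1')-1+S_1$, which are equal iff $\varphi_{k_1}(f_1)=\varphi_{k_1}(f_1')$; disjointness for distinct values is immediate. For the third bullet, with $k_1\neq k_1'$ I must show $(k_1-1)Z_1+\varphi_{k_1}(f_1)\neq (k_1'-1)Z_1+\varphi_{k_1'}(f_1')$: here the key point is that $\varphi_{k_1}(f_1)$ ranges over $[1:Z_1]$ because, by Condition C1, each column of $\mathbf{A}$ contains exactly $Z_1$ stars, so $\varphi_{k_1}$ is a bijection from the star rows of column $k_1$ onto $[Z_1]$. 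Then $(k_1-1)Z_1+\varphi_{k_1}(f_1)$ lies in the interval $[(k_1-1)Z_1+1:k_1Z_1]$, and these intervals are disjoint for distinct $k_1$, giving the claim. For the fourth bullet, when $s\in[S_1]$ the block index of $\mathbf{I}_1(s)$ is $s-1\le S_1-1$, whereas the block index of any $\mathbf{I}_2(k_1,f_1)$ is at least $S_1$ (since $(k_1-1)Z_1+\varphi_{k_1}(f_1)-1\ge 0$), so the two blocks are disjoint.

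The only mildly delicate point — and the one I would spell out carefully rather than assert — is the claim that $\varphi_{k_1}$ takes values exactly in $[1:Z_1]$; this is what makes the ``stacking'' of the $K_1$ groups of inner arrays collision-free, and it rests squarely on Condition C1 of the PDA definition. Everything else is just the observation that adding integer multiples of $S_2$ to the alphabet $[S_2]$ produces a partition of $\mathbb{Z}_{>0}$ into consecutive length-$S_2$ blocks indexed by those multiples. I expect no real obstacle here; the lemma is essentially a bookkeeping statement, and the proof is a short case analysis on block indices.
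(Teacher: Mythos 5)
Your proposal is correct and takes essentially the same route as the paper's Appendix B: both arguments reduce to the observation that each inner array's integer set is a consecutive length-$S_2$ block determined by its shift, the paper verifying disjointness by checking that the minimum of one block exceeds the maximum of the other while you compare block indices directly, which is an equivalent computation. Your explicit appeal to Condition C1 to pin $\varphi_{k_1}$ into $[1:Z_1]$ is exactly the step the paper uses implicitly in the inequality $\varphi_{k_1'}(f_1')-\varphi_{k_1}(f_1)-1\ge -Z_1$ for the third bullet, so nothing is missing.
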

By Lemma \ref{lem-integer-se-order}, each entry $p^{(k_1)}_{(f_1,f_2),k_2}$ in $\left(\mathbf{P}^{(1)}\ldots, \mathbf{P}^{(K_1)} \right)$ is determined uniquely and can be written as follows,
\begin{IEEEeqnarray}{rCl}
\label{eq-user-cache}
p^{(k_1)}_{(f_1,f_2),k_2}=\left\{\begin{array}{l}
                  b_{f_2,k_2}\!+\!(s\!-\!1)S_2 ,\quad \text{if } a_{f_1,k_1}\!=\!s,\\
                  b_{f_2,k_2}\!+\!\big[(k_1\!-\!1)Z_1\!
                 + \varphi_{k_1}(f_1)\!-\!1\!+\!S_1\big]S_2 , \  \text{if } a_{f_1,k_1}=*.
                \end{array}\right.
\end{IEEEeqnarray}

\item{\bf Step 3.} Construction of $\mathbf{P}$.
\label{sec:con-hpda}
We get an $F_1F_2\times$$(K_1\!+\!K_1K_2)$ array by arranging $\mathbf{P}^{(0)}$ and $\left(\mathbf{P}^{(1)},\ldots,\mathbf{P}^{(K_1)}\right)$ horizontally, i.e., $\mathbf{P}=\left(\mathbf{P}^{(0)}\right.$,$\mathbf{P}^{(1)}$,$ \ldots$,$\left. \mathbf{P}^{(K_1)}\right)$.
\end{itemize}

\subsubsection{Parameter computations}
\label{subsub-parameters}
For a $(K,F,Z,S)$ PDA, we define $\mathcal{C}_{i}$ as the integer set containing all the integers in the $i$-th column where $\mathcal{C}_{i}\subset[S]$ and  $|\mathcal{C}_{i}|=F-Z$. Firstly we consider the integer set $\mathcal{S}_{\text{m}}$, which is the union set of the integer set of each $\mathbf{I}_2(k_1,f_1)$. There are in total $K_1Z_1$ stars in $\mathbf{A}$, then from $\eqref{eq-type-II}$ and Lemma \ref{lem-integer-se-order} we have
\begin{eqnarray}
\label{eq-s_m}
\begin{split}
\mathcal{S}_{\text{m}}&=\left(S_1S_2:\ (S_1+Z_1K_1)S_2\right].&
\end{split}
\end{eqnarray}
Obviously the cardinality of $\mathcal{S}_{\text{m}}$ is $|\mathcal{S}_{\text{m}}|=Z_1K_1S_2$. %In the later we will show that $\mathcal{S}_{\text{m}}$ is our expected subset.

Secondly we focus on $\mathcal{S}_{k_1}$, i.e., the integer set of $\mathbf{P}^{(k_1)}$ for each $k_1\in[K_1]$. From \eqref{eq-type-I}, $\mathbf{P}^{(k_1)}$ is composed of $F_1-Z_1$ inner arrays $\mathbf{I}_1(s)$ and $Z_1$ inner arrays $\mathbf{I}_2(k_1,f_1)$. So the integer set $\mathcal{S}_{k_1}$ is actually the union set of all integer sets of the $F_1$ inner arrays. Then from \eqref{eq-type-I}, \eqref{eq-type-II} and Lemma \ref{lem-integer-se-order},  the integer set of $\mathbf{P}^{(k_1)}$ is
\begin{eqnarray}
\label{eq-S-k_1}
\begin{split}
\mathcal{S}_{k_1}&=\left(((k_1-1)Z_1+S_1)S_2:(k_1Z_1+S_1)S_2\right]
\bigcup\left(\bigcup\limits_{s\in \mathcal{C}_{k_1},k_1\in [K_1] }\left(0+(s-1)S_2:sS_2\right]\right),
\end{split}
\end{eqnarray} $k_1\in [K_1]$.  Because $s\in[S_1]$, according to Lemma \ref{lem-integer-se-order}, all the $F_1$ integer sets of inner arrays used for composing $\mathbf{P}^{(k_1)}$ do not have any common integer, so we have $|\mathcal{S}_{k_1}|=F_1S_2$.
%\begin{eqnarray}
%\label{eq-S_k_1-v}
%\begin{split}
%|\mathcal{S}_{k_1}|&=F_1S_2.&
%\end{split}
%\end{eqnarray}
%So the integer set of $\mathbf{P}'$ is
%$\sum_{k_1\in [K]}\mathcal{S}_{k_1}\cup \mathcal{S}_{\text{m}}$ with cardinality $(S_1+K_1Z_1)S_2$ by \eqref{eq-user-cache}.
\subsubsection{The properties of HPDA verification}
\label{subsub-verify}
Because there are $Z_1$ stars in each column of $\mathbf{A}$, from \eqref{eq-mirror-cache} each column of $\mathbf{P}^{(0)}$ has exactly $Z_1F_2$ stars, satisfying   Condition B$1$ of Definition \ref{def-H-PDA}.

Then we focus on Condition B$2$, i.e.,  $\mathbf{P}^{(k_1)}$ is a $(K_2$, $F_1F_2$, $F_1Z_2$, $ F_1S_2)$ PDA. Because $\mathbf{P}^{(k_1)}$ is composed of $F_1$ inner arrays, each column of $\mathbf{P}^{(k_1)}$ has $F_1Z_2$ stars. So Condition C$1$ of Definition \ref{def-PDA} holds. In the above we have $|\mathcal{S}_{k_1}|=F_1S_2$, obviously C$2$ of Definition \ref{def-PDA} holds.  Because all the arrays defined in \eqref{eq-type-I} and \eqref{eq-type-II} satisfy Condition C$3$ of Definition \ref{def-PDA} and by Lemma \ref{lem-integer-se-order}, the intersection of the integer sets of any two of the $F_1$ inner arrays is empty, then each $\mathbf{P}^{(k_1)}$ also satisfies the Condition C$3$. Thus,  each $\mathbf{P}^{(k_1)}$ is a $(K_2,F_1F_2,F_1Z_2,F_1S_2)$ PDA. %The conditions and B$2$ of Definition \ref{def-H-PDA} hold.

Now  consider Condition B$3$. Recall that all the integers in $\mathcal{S}_{\text{m}}$ are generated from \eqref{eq-type-II}. If $k_1\neq k'_1$, by the third statement of Lemma \ref{lem-integer-se-order}, each integer in $\mathcal{S}_{\text{m}}$ only exists in one $\mathbf{P}^{(k_1)}$. When the entry  $p^{(k_1)}_{(f_1,f_2),k_2}=s\in \mathcal{S}_{\text{m}}$, from \eqref{eq-user-cache} and \eqref{eq-mirror-cache} we have $a_{f_1,k_1}=*$, $p^{(0)}_{(f_1,f_2),k_1}=*$. Thus, Condition B$3$ holds.

Finally we consider the Condition B$4$. For any integers $k_1$, $k_1'\in [K_1]$, $k_2$, $k_2'\in [K_2]$ and any couples $(f_1,f_2)$, $(f_1',f_2')$, assume that $p^{(k_1)}_{(f_1,f_2),k_2}=
p^{(k'_1)}_{(f_1',f_2'),k'_2}=s$ is an integer. By Lemma \ref{lem-integer-se-order}, the case $k_1\neq k'_1$, $f_1=f'_1$ is impossible since the intersection of the integer sets of related inner arrays is empty. So we only need to consider the case where $k_1\neq k'_1$, $f_1\neq f'_1$, and there are three conditions:
\begin{itemize}
\item $p^{(k_1)}_{(f_1,f_2),k_2}$ and $p^{(k_1')}_{(f_1',f_2'),k_2'}$ are all in the arrays generated by \eqref{eq-type-II}. By the third statement of Lemma \ref{lem-integer-se-order}, this case is impossible since the intersection of the integer sets in related inner arrays is empty.
\item $p^{(k_1)}_{(f_1,f_2),k_2}$ and $p^{(k_1')}_{(f_1',f_2'),k_2'}$ are in the arrays generated by \eqref{eq-type-I} and \eqref{eq-type-II} respectively. By the forth statement of Lemma \ref{lem-integer-se-order}, this case is also impossible since the intersection of the integer sets in related inner arrays is empty.
\item $p^{(k_1)}_{(f_1,f_2),k_2}$ and $p^{(k_1')}_{(f_1',f_2'),k_2'}$ are all in the arrays generated by \eqref{eq-type-I}. Then we have $s=b_{f_2,k_2}+(s'-1)\times S_2=b_{f_2',k_2'}+(s''-1)\times S_2$ and it's     true if and only if $b_{f_2,k_2}=b_{f_2',k_2'}$, $a_{f_1,k_1}=s'=a_{f'_1,k'_1}=s''$, because $b_{f_2,k_2}, b_{f_2',k_2'}\leq S_2$. Without loss of generality we assume that $p^{(k_1)}_{(f'_1,f'_2),k_2}$ is an integer entry. Because $a_{f_1,k_1}=a_{f'_1,k'_1}=s'$, and from Condition C$3$ of definition \ref{def-PDA} we have $a_{f_1',k_1}=a_{f_1,k_1'}=*$. According to \eqref{eq-mirror-cache} we have $p^{(0)}_{(f_1',f_2'),k_1}=*$.
\end{itemize}
 From the above discussion, the Condition B$4$ of Definition \ref{def-H-PDA} holds. Thus, $\mathbf{P}$ is our expected HPDA.

 The integer set $\bigcup_{k_1=1}^{K_1}\mathcal{S}_{k_1}\setminus\mathcal{S}_{\text{M}}$ is the union set of  integer sets in $S_1$ inner arrays $\mathbf{I}_1(s)$, $s\in[S_1]$, then we have $|\bigcup_{k_1=1}^{K_1}\mathcal{S}_{k_1}\setminus\mathcal{S}_{\text{M}}|=S_1S_2$. From Theorem \ref{th-main-result} we have the load for the first layer
\begin{eqnarray*}
&R_1=\frac{|\bigcup_{k_1=1}^{K_1}\mathcal{S}_{k_1}|-|\mathcal{S}_\text{m}|}{F}
=\frac{S_1S_2}{F_1F_2}
\end{eqnarray*} and the load for the second layer
\begin{eqnarray*}
R_2&=&\max\left.\left\{\  \frac{\mid\mathcal{S}_{k_1}\mid}{F}\ \right| k_1\in[K_1]  \right\}\\
&=&\frac{F_1S_2}{F_1F_2}\\
&=&\frac{S_2}{F_2}.
\end{eqnarray*}
\section{Conclusion}
\label{sec:7}
In this paper, we studied the hierarchical network model and introduced a new combination structure, referred as HPDA, which can be used to characterize both the placement and delivery strategy of the coded caching scheme. So the problem of designing a scheme for hierarchical network is transformed into constructing an appropriate HPDA. Firstly we propose a class of HPDAs, which achieves the lower bound of the first layer transmission load $R_1$ for non-trivial cases, by dividing the MN PDAs into several equal size groups. Due to the limitation of the system parameters in this class of HPDAs,  we then proposed another class of HPDAs via a hybrid construction of two PDAs. Consequently, using any two PDAs, a new HPDA can be obtained which   allows   flexible system parameters and has a smaller subpacketization level compared with our first class of HPDAs.

\begin{appendices}
\section{Lower Bound of $R^*_1$}
\label{appendix-optimal}
Recall that during the data placement placement, the cached contents at user $\mathbf{U}_{k_1,k_2}$ and mirror site $k_1$ are  $\mathcal{Z}_{k_1}$ and $\mathcal{Z}_{(k_1,k_2)}$, respectively.

Now we introduce an enhanced system where   each user already knows the cache contents of its connected mirror site. For this enhanced system, denote the   cache contents of the user $\mathbf{U}_{k_1,k_2}$  as  $\bar{\mathcal{Z}}_{(k_1,k_2)}= \mathcal{Z}_{k_1}\cup \mathcal{Z}_{(k_1,k_2)} $. In   uncoded placement scenarios,      each file can be viewed as a collection of   $2^{K_1K_2}$ packets as $W_{i}=\{W_{i,\mathcal{T}}|\mathcal{T}\subseteq[K_1]\times[K_2]\}$, where user  $\mathbf{U}_{k_1,k_2}$ stores $W_{i,\mathcal{T}} $ if $(k_1,k_2)\in\mathcal{T}$. Consider one permutation of $[K_1]\times[K_2]$ denoted by $\{(1,1),(1,2),\ldots,(K_1,K_2)\}$, $(k_1,k_2)\in[K_1]\times[K_2]$, and one demand vector $\mathbf{d}=\{d_{1,1}, d_{1,2},\ldots,d_{K_1,K_2}\}$ where $d_{k_1,k_2}\neq d_{k'_1,k'_2}$ if $k_1\neq k'_1$ or $k_2\neq k'_2$.     We then   construct a genie-aided super-user with cached content
\begin{IEEEeqnarray}{rCl}
\bar{ {Z}}~&&=(\bar{\mathcal{Z}}_{(1,1)},\bar{\mathcal{Z}}_{(1,2)}\backslash(\bar{\mathcal{Z}}_{(1,1)}\cup W_{d_{1,1}}),\ldots,\nonumber\\
&&\quad\quad \bar{\mathcal{Z}}_{(K_1,K_2)}\backslash(\bar{\mathcal{Z}}_{(1,1)}\cup W_{d_{1,1}}\cup
\bar{\mathcal{Z}}_{(1,2)}\cup W_{d_{1,2}}\cup\cdots
\cup \bar{\mathcal{Z}}_{(K_1,K_2-1)}\cup W_{d_{K_1,K_2-1}}))).
\end{IEEEeqnarray}

The genie-aided super-user is able to recover $W_{d_{1,1}},$ $ W_{d_{1,2}},\ldots,$ $W_{d_{K_1,K_2}}$  from $(X,X_1,\ldots,X_{K_1},\bar{Z})$, where $X$ and $X_{k_1}$ are the signals sent by the server and mirror site $k_1$, respectively. Thus, we have
 \begin{IEEEeqnarray}{rCl}
&& H(W_{d_{1,1}}, W_{d_{1,2}},\ldots,W_{d_{K_1,K_2}}|\bar{Z}) \nonumber\\
% &&\quad\quad \leq  H(W_{d_{1,1}}, W_{d_{1,2}},\ldots,W_{d_{K_1,K_2}}|\bar{Z}) \nonumber\\
&&\quad  =  H(W_{d_{1,1}}, W_{d_{1,2}},\ldots,W_{d_{K_1,K_2}}|X,X_1,\ldots,X_{K_1},\bar{Z})
+I(W_{d_{1,1}}, W_{d_{1,2}},\ldots,W_{d_{K_1,K_2}};X,X_1,\ldots,X_{K_1}|\bar{Z})\nonumber\\
&&\quad  =I(W_{d_{1,1}}, W_{d_{1,2}},\ldots,W_{d_{K_1,K_2}};X,X_1,\ldots,X_{K_1}|\bar{Z})\nonumber\\
&&\quad  \leq H(X,X_1,\ldots,X_{K_1}|\bar{Z})\nonumber\\
&&\quad   {=} H(X|\bar{Z})  \label{barZup}
\end{IEEEeqnarray}
 where  the last equality holds because $\bar{Z}$  contains all mirrors' contents $\mathcal{Z}_{1},\ldots,\mathcal{Z}_{K_1}$, leading to $H(X_{k_1}|X,\bar{Z})=H(X_{k_1}|X,\mathcal{Z}_{k_1})=0$ for all $k_1\in[K_1]$.

 Next we introduce a more powerful enhanced system    where   each user $\mathbf{U}_{k_1,k_2}$ has a caching size of $(M_1+M_2)B$ bits, and denote its cached content as   $\hat{\mathcal{Z}}_{(k_1,k_2)}$. Note that this enhanced system can only result in smaller communication loads in $R_1$ and $R_2$  than that of the first enhanced   system. This is because  in the new enhanced system  each users $\mathbf{U}_{k_1,k_2} $ is able to   cache any set of  sub-files of   $(M_1+M_2)B$ bits, including the caching strategy of the first enhanced system $\bar{\mathcal{Z}}_{(k_1,k_2)}= \mathcal{Z}_{k_1}\cup \mathcal{Z}_{(k_1,k_2)} $. We then construct a new genie-aided super-user with cached content
\begin{IEEEeqnarray}{rCl}\label{contentbhatZ}
\hat{ {Z}}~&&=(\hat{\mathcal{Z}}_{(1,1)},\hat{\mathcal{Z}}_{(1,2)}\backslash(\hat{\mathcal{Z}}_{(1,1)}\cup W_{d_{1,1}}),\ldots,\nonumber\\
&&\ \ \ \bar{\mathcal{Z}}_{(K_1,K_2)}\backslash(\hat{\mathcal{Z}}_{(1,1)}\cup W_{d_{1,1}}\cup \hat{\mathcal{Z}}_{(1,2)}\cup W_{d_{1,2}}\cup\cdots
\cup\hat{\mathcal{Z}}_{(K_1,K_2-1)}\cup W_{d_{K_1,K_2-1}}))).
\end{IEEEeqnarray}
Due to the stronger caching ability of the new genie-aided super-user, we have
 \begin{IEEEeqnarray}{rCl}\label{keystepLower}
 H(W_{d_{1,1}}, W_{d_{1,2}},\ldots,W_{d_{K_1,K_2}}|\hat{Z})
 \leq  H(W_{d_{1,1}}, W_{d_{1,2}},\ldots,W_{d_{K_1,K_2}}|\bar{Z})
 \stackrel{(a)}{\leq} H(X|\bar{Z})\leq H(X).\quad
\end{IEEEeqnarray}
 where   (a) holds by \eqref{barZup}. From \eqref{keystepLower}, we obtain that
 \begin{IEEEeqnarray}{rCl}
 R_1\geq\sum_{(k_1,k_2)\in[K_1]\times[K_2]} \ \ \sum_{\mathcal{T}\subseteq[K_1]\times[K_2] \backslash\{(1,1),\ldots,(k_1,k_2)\}} \frac{W_{d_{k_1,k_2},\mathcal{T}}}{B}\nonumber\\
 &&
 \label{Rlower1}
\end{IEEEeqnarray}
 This is equivalent to a single-layer coded caching system where the server connects  $K$-user each equipped with cache memory of $(M_1+M_2)B$ bits. Now we follow the method in \cite{Wang'ITW16} to prove the lower bound of $R^*_1$.

Summing all the inequalities in the form of \eqref{Rlower1} over all permutations of
users and all demand vectors in which users have distinct demands, we obtain that
\begin{subequations}\label{Rlower2}
\begin{IEEEeqnarray}{rCl}
R_1\geq \sum_{t\in[0:K_1K_2]} \frac{\binom{K_1K_2}{t+1}}{N\binom{K_1K_2}{t}}x_t = \sum_{t\in[0:K_1K_2]} \frac{K_1K_2-t}{t+1} x_t
\end{IEEEeqnarray}
where
\begin{IEEEeqnarray}{rCl}
x_t=\sum_{i\in[N]}\ \ \sum_{\mathcal{T}\subseteq[K_1]\times[K_2]:|\mathcal{T}|=t} \frac{W_{i,\mathcal{T}}}{B}.
\end{IEEEeqnarray}
\end{subequations}
Also, we have the following conditions   due to the constraints on the file size and memory size
\begin{IEEEeqnarray}{rCl}\label{Rlower2Condition}
\sum_{t\in[0:K_1K_2]} x_t =N,\quad  \sum_{t\in[0:K_1K_2]}tx_t=K_1K_2(M_1+M_2).
\end{IEEEeqnarray}
Combining \eqref{Rlower2} and \eqref{Rlower2Condition} and by Fourier Motzkin elimination, we obtain  the lower bound of $$R_1^*\geq\frac{K_1K_2-t}{t+1}, ~\text{for~}  t\in[0:K_1K_2] . $$

\section{Proof of Lemma \ref{lem-integer-se-order}}\label{appen-set-order}
Without loss of generality we assume that $s'>s$. From \eqref{eq-type-I} the first statement holds since the minimum integer of the integer set of $\mathbf{I}_1(s')$ minus the maximum integer of the integer set of $\mathbf{I}_1(s)$ is
\begin{eqnarray*}
&&\left(1+(s'-1)S_2 \right)-\left(S_2+(s-1)S_2\right)\\
&&\ =[s'-(s+1)]S_2+1\\
&&\ \geq1.
\end{eqnarray*} While if $s'=s$, $\mathbf{I}_1(s)$ and $\mathbf{I}_1(s')$ are the same array.

Without loss of generality we assume that $\varphi_{k'_1}(f'_1)>\varphi_{k_1}(f_1)$. From \eqref{eq-type-II} the second statement holds since the minimum integer of the integer set of $\mathbf{I}_2(k'_1,f'_1)$ minus the maximum integer of the integer set of $\mathbf{I}_2(k_1,f_1)$ is
\begin{eqnarray*}
&&1+[(k'_1-1)Z_1+\varphi_{k'_1}(f'_1)-1+S_1]S_2-S_2-
[(k_1-1)Z_1+\varphi_{k_1}(f_1)-1+S_1]S_2\\
&&\ =(\varphi_{k'_1}(f'_1)-\varphi_{k_1}(f_1)-1)S_2+1\\
&&\ \geq 1.
\end{eqnarray*} While if $\varphi_{k'_1}(f'_1)=\varphi_{k_1}(f_1)$, $\mathbf{I}_2(k'_1,f'_1)$ and $\mathbf{I}_2(k_1,f_1)$ are the same array.

Without loss of generality we assume that $k'_1>k_1$. From \eqref{eq-type-II} the third statement holds since the minimum integer of the integer set of $\mathbf{I}_2(k'_1,f'_1)$ minus the maximum integer of the integer set of $\mathbf{I}_2(k_1,f_1)$ is
\begin{eqnarray*}
&&1+[(k'_1-1)Z_1+\varphi_{k'_1}(f'_1)-1+S_1]S_2-S_2-[(k_1-1)Z_1+\varphi_{k_1}(f_1)-1+S_1]S_2\\
&&\ =1+[(k'_1-k_1)Z_1+\varphi_{k'_1}(f'_1)-\varphi_{k_1}(f_1)-1]S_2\\
&&\ \geq 1+[(k'_1-k_1-1)Z_1]S_2\\
&&\ \geq 1.
\end{eqnarray*}

From \eqref{eq-type-I} and \eqref{eq-type-II} the last statement holds since the minimum integer of the integer set of $\mathbf{I}_2(k_1,f_1)$ minus the maximum integer of the integer set of $\mathbf{I}_1(s)$ is
\begin{eqnarray*}
&&1+[(k_1-1)Z_1+\varphi_{k_1}(f_1)-1+S_1]S_2-S_2-(S_1-1)S_2\\
&&\ =1+[(k_1-1)Z_1+\varphi_{k_1}(f_1)-1]S_2\\
&&\ \geq 1.
\end{eqnarray*}
\end{appendices}

\ifCLASSOPTIONcaptionsoff
  \newpage
\fi

\bibliographystyle{IEEEtran}
\bibliography{reference}

\end{document}